\documentclass[12pt]{article}
\usepackage{amsmath}
\usepackage{amsthm}
\usepackage{graphicx,psfrag,epsf}
\usepackage{amsfonts}
\usepackage{float}
\usepackage{enumitem}
\usepackage{natbib}
\usepackage[nottoc,numbib]{tocbibind}
\usepackage{booktabs,caption,subcaption}
\usepackage[flushleft]{threeparttable}
\usepackage{multirow}
\newcommand{\ra}[1]{\renewcommand{\arraystretch}{#1}}

\usepackage{xr}

\newcommand{\blind}{1}

\addtolength{\oddsidemargin}{-.5in}%
\addtolength{\evensidemargin}{-.5in}%
\addtolength{\textwidth}{1in}%
\addtolength{\textheight}{-.3in}%
\addtolength{\topmargin}{-.8in}%

\newcommand{\bA}{\boldsymbol{A}}

\newcommand{\bj}{\boldsymbol{j}}
\newcommand{\bS}{\boldsymbol{S}}
\newcommand{\bs}{\boldsymbol{s}}

\newcommand{\bV}{\boldsymbol{V}}
\newcommand{\bv}{\boldsymbol{v}}

\newcommand{\bX}{\boldsymbol{X}}

\newcommand{\bY}{\boldsymbol{Y}}
\newcommand{\by}{\boldsymbol{y}}
\newcommand{\bmu}{\boldsymbol{\mu}}
\newcommand{\bbeta}{\boldsymbol{\beta}}
\newcommand{\bGamma}{\boldsymbol{\Gamma}}
\newcommand{\bgamma}{\boldsymbol{\gamma}}
\newcommand{\bPsi}{\boldsymbol{\Psi}}
\newcommand{\bpsi}{\boldsymbol{\psi}}
\newcommand{\bG}{\boldsymbol{G}}
\newcommand{\bg}{\boldsymbol{g}}

\newcommand{\bSigma}{\boldsymbol{\Sigma}}

\newtheorem{theorem}{Theorem}

\newtheorem{lemma}{Lemma}
\theoremstyle{definition}

\allowdisplaybreaks

\begin{document}

\def\spacingset#1{\renewcommand{\baselinestretch}%
{#1}\small\normalsize} \spacingset{1}


\if1\blind
{
  \title{\bf A Distributed and Integrated Method of Moments for High-Dimensional Correlated Data Analysis}
  \author{Emily C. Hector\thanks{
    This research was funded by grants NSF DMS1513595, NIH R01ES024732 and NIH P01ES022844.}\hspace{.2cm}\\
    Department of Biostatistics, University of Michigan\\
    Peter X.-K. Song \\
    Department of Biostatistics, University of Michigan}
    \date{}
  \maketitle
} \fi

\if0\blind
{
  \bigskip
  \bigskip
  \bigskip
  \begin{center}
    {\LARGE\bf Title}
\end{center}
  \medskip
} \fi

\bigskip
\begin{abstract}
This paper is motivated by a regression analysis of electroencephalography (EEG) neuroimaging data with high-dimensional correlated responses with multi-level nested correlations. We develop a divide-and-conquer procedure implemented in a fully distributed and parallelized computational scheme for statistical estimation and inference of regression parameters. Despite significant efforts in the literature, the computational bottleneck associated with high-dimensional likelihoods prevents the scalability of existing methods. The proposed method addresses this challenge by dividing responses into subvectors to be analyzed separately and in parallel on a distributed platform using pairwise composite likelihood. Theoretical challenges related to combining results from dependent data are overcome in a statistically efficient way using a meta-estimator derived from Hansen's generalized method of moments. We provide a rigorous theoretical framework for efficient estimation, inference, and goodness-of-fit tests. We develop an R package for ease of implementation. We illustrate our method's performance with simulations and the analysis of the EEG data, and find that iron deficiency is significantly associated with two auditory recognition memory related potentials in the left parietal-occipital region of the brain.
\end{abstract}

\noindent%
{\it Keywords: Composite likelihood, Divide-and-conquer, Generalized method of moments, Parallel computing, Scalable computing.}  
\vfill

\newpage
\spacingset{1.45} 
\section{INTRODUCTION}
\label{sec:intro}

This paper focuses on developing a systematic divide-and-conquer procedure, readily implemented in a parallel and scalable computational scheme, for statistical estimation and inference. We consider a regression setting with high-dimensional correlated responses with multi-level nested correlations. The proposed Distributed and Integrated Method of Moments (DIMM) is flexible, fast, and statistically efficient, and reduces computing time in two ways: (i) in the distributed step, composite likelihood is executed in parallel at a number of distributed computing nodes, and (ii) at the integrated step, an efficient one-step meta-estimator is derived from \cite{Hansen}'s seminal generalized method of moments (GMM) with no need to load the entire data on a common server.\\
Let $\bY_i$ be the $M$-dimensional correlated response for subject $i$, $i=1, \ldots, N$, and $\bmu_i=E(\bY_i \lvert \bX_i, \bbeta)$ the mean response-covariate relationship of interest for some $M\times p$ dimensional matrix of covariates $\bX_i$ and a $p$-dimensional parameter of interest $\bbeta$. We model $\bmu_i$ by a generalized linear model of the form $g(\bmu_i)=\bX_i \bbeta$, where $g$ is a known link function. The difficulties associated with current methods for high-dimensional correlated response modelling stem from computational burdens and modelling challenges associated with a high-dimensional likelihood. The generalized estimating equation (GEE) proposed by \cite{Liang-Zeger}, one of the widely used methods for the analysis of correlated response data, uses a quasilikelihood approach based on the first two moments of the response to avoid the specification of a parametric joint distribution. GEE is not well suited to high-dimensionality due to the potentially large number of nuisance parameters to estimate and the inversion of large matrices; see \cite{Cressie-Johannesson} and \cite{Banerjee-Gelfand-Finley-Sang}. Additionally, common assumptions by GEE on the correlation structure of the response are too simple to capture multi-level nested correlations, resulting in a substantial loss of efficiency; see \cite{Fitzmaurice-Laird-Rotnitzky}. Simple cases where the estimator of the nuisance parameter does not exist are also outlined in \cite{Crowder}.\\
Composite likelihood (CL) was proposed by \cite{Lindsay} as a method to perform inference on $\bbeta$ by only considering low dimensional marginals of the joint distribution. Pairwise CL, in particular, constructs a pseudolikelihood by multiplying the likelihood objects of pairs of observations. In this way, CL is free of the computational burden of inverting high-dimensional correlation matrices associated with GEE and benefits from an objective function that facilitates model selection. Pairwise CL has been used in longitudinal (\cite{Kuk-Nott}, \cite{Kong-Wang-Gray}), spatial (\cite{Heagerty-Lele}, \cite{Arbia}), spatiotemporal (\cite{Bai-Song-Raghunathan}, \cite{Bevilacqua-Gaetan-Mateu-Porcu}), and genetic (\cite{Larribe-Fearnhead}) data analyses. A well-known bottleneck of CL is the high computational cost of evaluating a large number of low-dimensional likelihoods and their derivatives, a problem that is exacerbated with large $M$.\\
The use of CL relies on knowledge of low-dimensional dependencies among $\bY_i$ in order to specify pairwise CLs properly. Fortunately, in practice, observations within $\bY_i$ are often known to belong to homogeneously-correlated groups, or sub-responses, established by previous science: for example, genomic response data can be grouped by gene or genetic function, metabolomic data by pathway, spatial data by proximity, and brain imaging data by brain function regions. This substantive scientific knowledge can be used to strategically partition response variables in order to speed up computations. The method of divide-and-conquer is a state of the art approach to analyzing data that can be partitioned. In the current literature, this method proposes to randomly split subjects into independent groups in the ``divide'' step (or ``Mapper'') and combines results in the ``conquer'' step (or ``Reducer''); see for example kernel ridge regression (\cite{Zhang-Duchi-Wainwright}) and matrix factorization (\cite{Mackey-Talwalkar-Jordan}). The independent groups can be analyzed in parallel, greatly reducing computation time. Extending the divide-and-conquer approach to our problem, we propose to split the high-dimensional correlated response into subvectors to form correlated response groups according to substantive scientific knowledge. Each subvector is analyzed separately, then results from these analyses are combined. While this method is computationally appealing, our groups of data are correlated, leading to new methodological challenges. In particular, correlation between groups of data must be taken into account when combining results. To our knowledge, our method is among the first attempts to establish a rigorous theoretical framework for combining results from correlated groups of data. The key technique to establish the related theoretical framework relies on an extended version of the confidence distribution based on pairwise CL to derive a GMM estimator of $\bbeta$. For discussion on the confidence distribution and related work with independent cross-sectional data, see \cite{Xie-Singh}, \cite{Singh-Xie-Strawderman} and \cite{Liu-Liu-Xie}; for a similar divide-and-conquer approach with independent scalar responses, see \cite{Lin-Xi}. We propose an optimal weighting matrix that non-parametrically accounts for between-group correlations. The resulting DIMM alleviates the computational burden and modelling challenges associated with existing methods. \\
\begin{figure}
\centering
\begin{subfigure}{0.52\linewidth}
\centering
\includegraphics[width=0.95\linewidth]{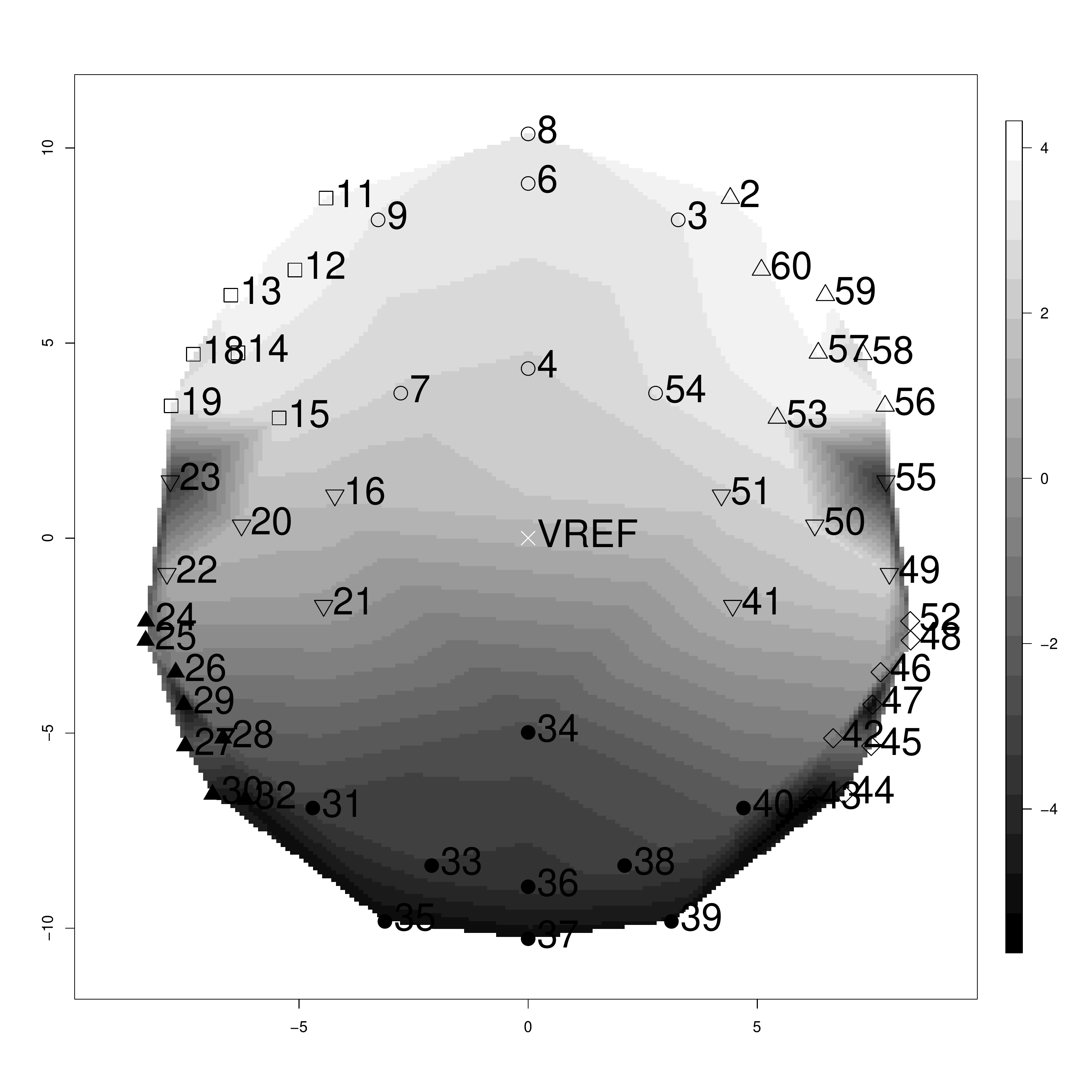}
\caption{}
\label{EEG_plot}
\end{subfigure}%
\begin{subfigure}{0.5\linewidth}
\centering
\includegraphics[width=0.95\linewidth]{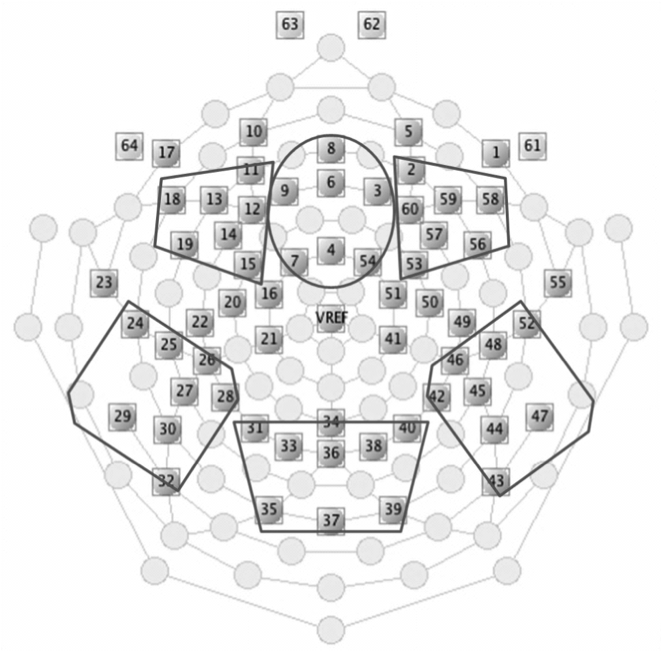}
\caption{}
\label{map-nodes-regions} 
\end{subfigure}
\caption{(a) Average P2 amplitude for iron sufficient children under stimulus of mother's voice. Color plot and additional plots in Supplemental Material. (b) Layout of the 64 channel sensor net with brain regions related to auditory recognition memory.}
\end{figure}
We illustrate our method with a motivating cohort study to assess the association between iron deficiency and auditory recognition memory in infants. Electrical activity in the brain during a 2000 milliseconds period was measured in 157 infants under two vocal stimuli using an electroencephalography (EEG) net consisting of 64-channel sensors on the scalp as visualized in Figure \ref{EEG_plot}. For each sensor and each stimulus, three important event-related potentials (ERPs) related to auditory recognition memory were calculated; as shown in Figure \ref{density-plot}, P2 averages electrical signal between 175 and 300 milliseconds, P750 between 350 and 600 milliseconds, and late slow wave (LSW) between 850 and 1100 milliseconds. The investigator wanted to analyze the data in sub-regions, where $46$ of the nodes belong to six brain function regions related to auditory recognition memory, as seen in Figure \ref{map-nodes-regions}. The complex data-generating mechanism results in a response of dimension $M=46(nodes)\times 3(ERPs)\times 2(stimuli)=276$ that has a multi-level nested correlation structure that is difficult to model, including longitudinal correlations between the three ERP's, spatial correlations between the 46 nodes and within the six brain function regions, and correlations within each voice stimulus. Due to this complex correlation structure and the large number of response variables, traditional methods for correlated data analysis are greatly challenged. \cite{Zhou-Song} developed a method to analyze the LSW outcome, but no existing method is suitable to analyze this dataset in its entirety. We develop DIMM, a fast and efficient method to analyze all 276 responses simultaneously by partitioning the response according to ERPs and brain function regions. DIMM also performs well with higher dimensional correlated outcomes, as seen in simulations.\\
\begin{figure}
\centerline{\includegraphics[width=0.75\linewidth]{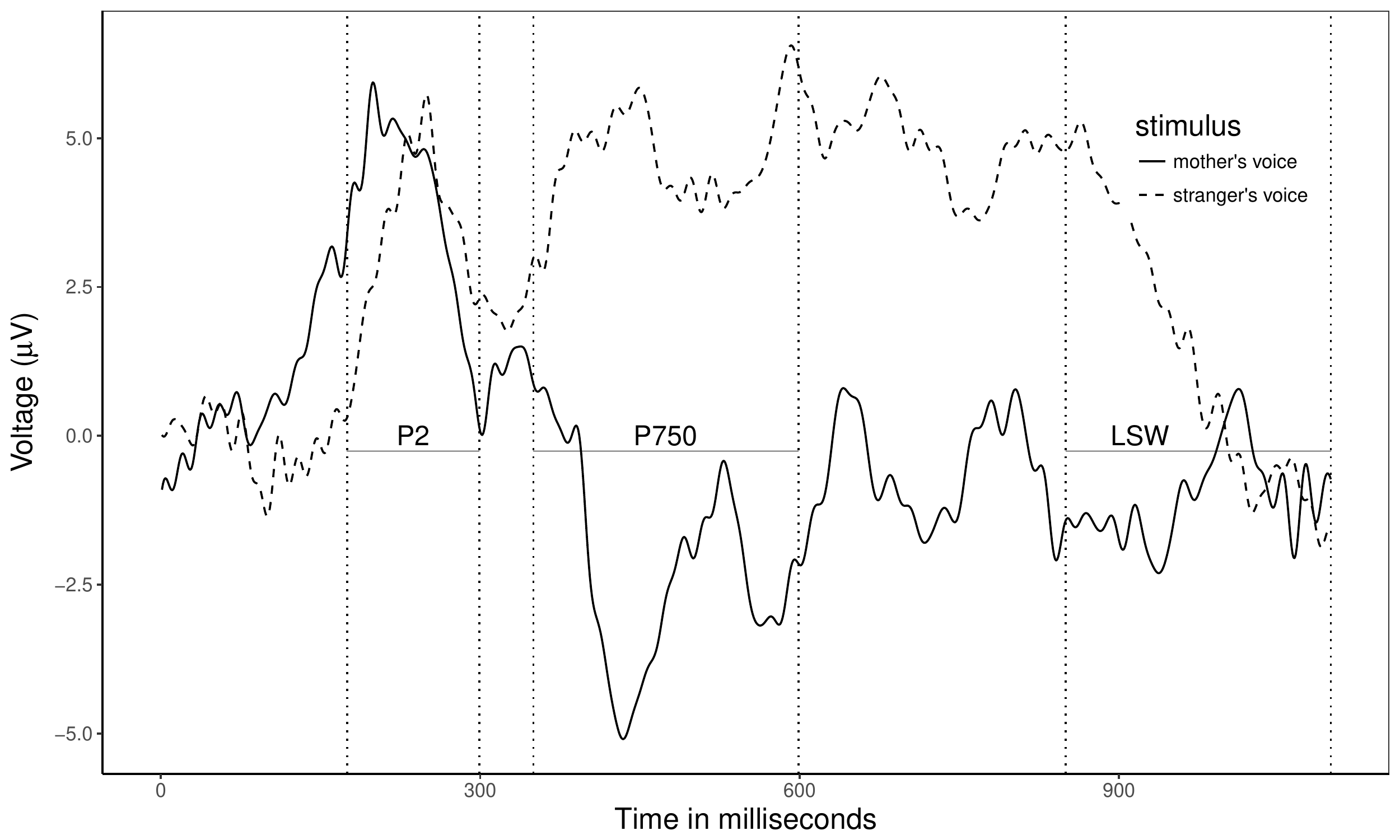}}
\caption{Plot of electrical potential for subject 1 at electrode 2 over time.}\label{density-plot}
\end{figure}
In this paper, we develop a Distributed and Integrated Method of Moments (DIMM) that splits the high-dimensional response into dependent groups of response subvectors according to substantive science, analyzes these smaller groups of responses separately using pairwise CL, and combines the results using an optimal GMM. Our proposed method loses very little estimation efficiency for two reasons: (i) CL performs well on smaller groups of responses with simple but well-approximated local correlation structure; and (ii) we use an optimal weighting matrix in the GMM. More importantly, our method is computationally attractive for two reasons: (i) pairwise CL only evaluates low-dimensional likelihoods and CL analyses can be run in parallel; and (ii) we provide a closed-form of the combined estimator that only depends on CL estimates and group-specific sufficient statistics. Finally, this paper contributes substantially to the existing literature with two key innovations: (i) a rigorous theoretical framework for combining estimates from dependent groups of data that is (ii) scalable to very large $M$. In addition, the proposed method is illustrated on a complex dataset that could previously not be analyzed in its entirety.\\
The rest of the paper is organized as follows. Section \ref{sec:methods} describes DIMM. Section \ref{sec:asymptotics} discusses large sample properties. Section \ref{sec:implementation} presents the closed form one-step meta-estimator, and its implementation in a parallel and scalable computational scheme. Section \ref{sec:simulations} illustrates DIMM's finite sample performance with simulations under the linear model. Section \ref{sec:data} presents the EEG data analysis. Section \ref{sec:discussion} concludes with a discussion. Proofs of theorems and additional simulation and data analysis results are deferred to the Appendix and Supplemental Material.

\section{FORMULATION}
\label{sec:methods}

Let $\left\{ \by_i, \bX_i \right\}_{i=1}^N$ be $N$ independent observations, where the dimension $M$ of $\by_i$ is so big that a direct analysis of the data is computationally intensive or prohibitive. Let $f(\bY_i;\bGamma_i, \bX_i)$ be the $M$-variate joint distribution of $\bY_i \lvert \bX_i$, where $\bGamma_i$ contains parameters of high-order dependencies that may be difficult to handle computationally. We aim to obtain a statistically efficient (small variance) and computationally fast estimator for the regression coefficient $\bbeta$ given the challenges arising from the high-dimensionality and complex dependencies of the response. Our DIMM solution uses a divide-and-conquer approach based on pairwise CL methodology for locally homogeneous data blocks.

\subsection{Division: distributed composite likelihoods}
\label{subsec:divide}

For each $i \in \left\{ 1, \ldots, N \right\}$, we propose to split the $M$-dimensional response $\by_i$ and associated covariates into $J$ blocks $\left\{ \by_{i,j}, \bX_{i,j} \right\}_{i=1}^N$ for $j=1, \ldots, J$, as follows:
\begin{align*}
\by_i&=\left(\begin{array}{ccc} \by_{i,1}^T & \ldots & \by_{i,J}^T \end{array} \right)^T \mbox{ and }\bX_i=\left( \begin{array}{ccc} \bX_{i,1}^T & \ldots & \bX_{i,J}^T \end{array} \right)^T.
\end{align*}
Within block $j$, let $m_j$ be the dimension of subject $i$'s response, $\sum \limits_{j=1}^J m_j=M$, where $\by_{i,j}=\left(y_{i1,j}, \ldots, y_{im_j,j}\right)^T \in \mathbb{R}^{m_j}$ is subject $i$'s $j$th sub-response and $\bX_{i,j} \in \mathbb{R}^{m_j \times p}$ is the associated covariate matrix. Then $\left\{ \by_{i,j} \right\}_{i=1}^N$ are independent realizations of the random variables $\bY_{i,j} \lvert \bX_{i,j}$ whose $m_j$-variate distributions conditional on $\bX_{i,j}$ are denoted by $f(\by_{i,j}; \bGamma_{i,j}, \bX_{i,j})$. Parameter $\bGamma_{i,j}$ encodes information on the marginal moments of $\bY_{i,j}$. This yields $J$ regression models $g_j(\bmu_{i,j})=\bX_{i,j} \bbeta_j$, where $\bmu_{i,j}=E(\bY_{i,j} \lvert \bX_{i,j}, \bbeta_j)$ is the marginal mean of $\bY_{i,j}$, $j=1, \ldots, J$. In most cases, homogeneity of $g_j$ and $\bbeta_j$ holds such that $g_j=g$ and $\bbeta_j=\bbeta$ for $j=1, \ldots, J$; we drop the subscript $j$ by using $\bbeta$ and $g$ to denote $\bbeta_j$ and $g_j$. On some occasions, homogeneity may not hold, for example when each sub-response $\bY_{i,j}$ corresponds to continuous, count, or dichotomous outcomes; in this case, we propose to perform a sub-group analysis by combining regression parameter estimates over blocks where homogeneity in $g_j$ and $\bbeta_j$ holds. In the analysis of the EEG data, we choose $g_j=g$ to be the identity link and perform a sub-group analysis. We suggest splitting the response data according to substantive scientific knowledge, resulting in homogeneous correlations within each response subvector that are suitable for simplifications in structure. If such knowledge is lacking, data pre-processing may help to learn structural features of dependencies. In the linear model, estimation is robust against correlation misspecification; that is, estimators are still consistent but may not be efficient if the data split is not aligned with the true dependence structure.\\
We can obtain an estimate of $\bbeta$ for each of the $J$ blocks of data using pairwise CL methods. The above partition enables us to reduce the challenge of modelling $M$-order dependencies to that of modelling $m_j$-order dependencies of (approximately) local homogeneity. Modelling and estimating high-order moments is practically difficult with more complex tensor data structures, and may be even harder without adequate sample size. In addition, there may be tremendous computational burdens associated with the log likelihood or its derivative, such as the computation of a high-dimensional inverse covariance matrix in the multivariate normal model. To resolve this difficulty, CL has been suggested by many researchers (see \cite{Varin-Reid-Firth} and the references therein) as the method of choice, and takes the following form:
\begin{align}
\mathcal{L}_j(\bbeta, \bgamma_j; \by_{i,j})&=\prod \limits_{r=1}^{m_j-1} \prod \limits_{t=r+1}^{m_j} f_{j}(y_{ir,j},y_{it,j};\bbeta, \bgamma_j, \bX_{i,j}),
\label{PCL}
\end{align}
where $\bgamma_j$ only contains information on second-order moments of $\bY_{i,j}$. The nature of the data partition allows for different dependence parameters $\bgamma_j$, allowing us to make simplifying assumptions on the high-order dependencies of $\bY_{i,j}$. Here, $f_j$ can be chosen according to the data type under investigation as bivariate margins of an $m_j$-variate joint distribution. For example, $f_j$ can be bivariate Normal for continuous data, or, using bivariate dispersion models generated by Gaussian or vine copulas, can be bivariate Poisson or Bernoulli for count or dichotomous data; see Chapter 6 of \cite{Song} and Chapter 3 of \cite{Joe-2}. We set $f_j$ bivariate Normal for the EEG data. Within block $j$, the log-CL for the first and second moment parameters is
\begin{align*}
c\ell_j(\bbeta, \bgamma_j; \by_j)&=\log \prod \limits_{i=1}^N \mathcal{L}_j(\bbeta, \bgamma_j;\by_{i,j}) = \sum \limits_{i=1}^N \sum \limits_{r=1}^{m_j-1} \sum \limits_{t=r+1}^{m_j} \log f_j (y_{ir,j}, y_{it,j}; \bbeta, \bgamma_j, \bX_{i,j}).
\end{align*}
Define composite score functions $\bpsi_j (\bbeta; \by_{i,j}, \bgamma_j)=\sum \limits_{r=1}^{m_j-1} \sum \limits_{t=r+1}^{m_j} \nabla_{\bbeta} \log f_j (y_{ir,j}; y_{it,j}; \bbeta, \bgamma_j, \bX_{i,j})$ and $\bg_j(\bgamma_j; \by_{i,j}, \bbeta)=\sum \limits_{r=1}^{m_j-1} \sum \limits_{t=r+1}^{m_j} \nabla_{\bgamma_j} \log f_j (y_{ir,j}; y_{it,j}; \bbeta, \bgamma_j, \bX_{i,j})$. The pairwise CL estimating equations for the mean and covariance parameters are, respectively:
\begin{align}
\bPsi_j(\bbeta; \by_j, \bgamma_j)=\frac{1}{N} \sum \limits_{i=1}^N \bpsi_j (\bbeta; \by_{i,j}, \bgamma_j)=0 \label{block-EE-1}  \\
\bG_j(\bgamma_j; \by_j, \bbeta)=\frac{1}{N} \sum \limits_{i=1}^N \bg_j(\bgamma_j; \by_{i,j}, \bbeta)=0 \label{block-EE-2} 
\end{align}
Following \cite{Varin-Reid-Firth}, the maximum composite likelihood estimators (MCLE) of $\bbeta$ and $\bgamma_j$ within block $j$, denoted respectively by $\widehat{\bbeta_j}$ and $\widehat{\bgamma_j}$, are the joint solution to the system of unbiased estimating equations in \eqref{block-EE-1} and \eqref{block-EE-2}.\\
Generally, $\bgamma_j$ is block-specific and unknown, and $\widehat{\bbeta_j}$ depends on $\widehat{\bgamma_j}$. When $\bgamma_j$ is a function of $\bbeta$ only, as in generalized linear models, finding $\widehat{\bbeta_j}$ amounts to profile likelihood estimation.  If $\bgamma_j$ is known or absent, then the above simplifies to finding $\widehat{\bbeta_j}$ as the solution to $\bPsi_j(\bbeta; \by_j, \bgamma_j)=0$. We denote $\widehat{\bbeta}_{MCLE}=\left( \widehat{\bbeta_1}^T, \ldots, \widehat{\bbeta_J}^T \right)^T$ and $\widehat{\bgamma}_{MCLE}=\left( \widehat{\bgamma_1}^T, \ldots, \widehat{\bgamma_J}^T \right)^T$. In some practical studies where interest is in block-specific mean parameters and combined dependence parameters, we can treat $\bbeta$ as a nuisance parameter and $\bgamma_j$ as the parameter of interest by switching the roles of $\bPsi_j$ and $\bG_j$ in the following presentation.

\subsection{Integration: the generalized method of moments}
\label{subsec:combine}
We have successfully obtained $J$ estimates of $\bbeta$ based on $J$ estimating equations \eqref{block-EE-1}. In the integration step, we treat each estimating equation $\bPsi_j(\bbeta; \by_j, \widehat{\bgamma_j})=0$ as a moment condition on $\bbeta$ coming from block $j$, $j=1, \ldots, J$. We would like to derive an estimator $\widehat{\bbeta_c}$ of $\bbeta$ that satisfies all $J$ moment conditions. Unfortunately, there is no unique solution to all $J$ estimating equations because they over-identify our parameter; that is, the dimension of parameter $\bbeta$ is less than $Jp$, the dimension of the equation restrictions on $\bbeta$. To overcome this, we invoke \cite{Hansen}'s seminal GMM to combine the moment conditions that arise from each block. Stack the $J$ estimating equations by defining $\bpsi_N(\bbeta; \by_i)=\left(
\bpsi^T_1 (\bbeta; \by_{i,1}, \widehat{\bgamma_1}),
\ldots,
\bpsi^T_J (\bbeta; \by_{i,J}, \widehat{\bgamma_J}) \right)^T$ and 
\begin{align}
\bPsi_N \left( \bbeta; \by \right) &=\left( \begin{array}{ccc}
\bPsi^T_1(\bbeta; \by_1, \widehat{\bgamma_1}) &
\ldots &
\bPsi^T_J(\bbeta; \by_J, \widehat{\bgamma_J}) 
\end{array} \right)^T =\frac{1}{N} \sum \limits_{i=1}^N \bpsi_N(\bbeta; \by_i).
\label{stacked-EE}
\end{align}
Notice the implicit dependence on $\widehat{\bgamma}_{MCLE}$ in \eqref{stacked-EE}. Since $\bPsi_N(\bbeta; \by)=0$ has no solution, following Hansen's GMM we minimize a quadratic form of $\bPsi_N$ with weight matrix $\widehat{\bV}_{N, \bpsi}$, the $Jp \times Jp$ sample variance-covariance matrix of $\bPsi_N(\bbeta; \by)$ evaluated at the MCLE's:
\begin{align}
\widehat{\bV}_{N, \bpsi}&=\frac{1}{N} \sum \limits_{i=1}^N  \left( \bpsi^T_1(\widehat{\bbeta_1}; \by_{i,1}, \widehat{\bgamma_1}), \ldots, \bpsi^T_J(\widehat{\bbeta_J}; \by_{i,J}, \widehat{\bgamma_J}) \right)^{T~\otimes 2},
\end{align}
where $a^{\otimes 2}=a a^T$ for $a \in \mathbb{R}^{Jp}$. Then define the combined GMM estimator of $\bbeta$ as:
\begin{align}
\widehat{\bbeta_c}&=\arg \min \limits_{\bbeta} \left\{ N \bPsi_N^T(\bbeta; \by) \widehat{\bV}^{-1}_{N, \bpsi} \bPsi_N(\bbeta; \by) \right\}=\arg \min \limits_{\bbeta} Q_N(\bbeta).
\label{def:combined-estimator}
\end{align}
We notice similarities of \eqref{def:combined-estimator} to \cite{Qu-Lindsay-Li} but with a completely different way of constructing moment conditions, and to \cite{Wang-Wang-Song-2012} but with a completely different way of partitioning data and the added generality of allowing between-block correlations. The uniqueness of DIMM stems from combining estimating equations $\bPsi_j$ with GMM instead of combining $\widehat{\bbeta_j}$ or data blocks $\left\{\by_{i,j}, \bX_{i,j} \right\}_{i=1}^N$ directly. This new approach allows us to find a GMM estimator $\widehat{\bbeta_c}$ that benefits from a wealth of established theoretical properties. By using the sample variance $\widehat{\bV}_{N, \bpsi}$ we not only account for between-block correlations but find the optimal GMM estimator in the sense that $\widehat{\bbeta_c}$ has variance at least as small as any other estimator exploiting the same moment conditions, hereafter referred to as ``Hansen optimal''. The dimension of $\bPsi_N$ is also smaller than that of $\bY$, reducing the computational burden associated with handling $\bY$ directly.\\
To better understand our estimator, we can show that $\widehat{\bbeta_c}$ maximizes a density in a manner similar to the classic maximum likelihood estimator (MLE) by deriving the quadratic form in \eqref{def:combined-estimator} using an extended version of the confidence distribution (CD) (or density) (\cite{Fisher-1}, \cite{Fisher-2}, and \cite{Efron}). For more discussion on CD and applications to MLE with independent cross-sectional data, refer to \cite{Xie-Singh}, \cite{Singh-Xie-Strawderman}, and \cite{Liu-Liu-Xie}. So far, very little work has been done on the development of CD for correlated data. $\bPsi_j$ are sufficient statistics for $\bbeta$ within each block and are asymptotically Normally distributed under mild assumptions by the Central Limit Theorem (CLT). Their joint distribution is the distribution of $\bPsi_N$, which is also asymptotically Normal under the same mild assumptions of the CLT. Then as long as $\widehat{\bV}_{N, \bpsi}$ is a consistent estimator of the variance of $\bPsi_N$, $\sqrt{N} \widehat{V}_{N, \bpsi}^{-1/2} \bPsi_N(\bbeta_0; \by)$ asymptotically follows a standard normal distribution, where $\bbeta_0$ is the true value of $\bbeta$. By maximizing the distribution of $\bPsi_N$ as a function of $\bbeta$, we can find an estimator that accounts for correlation between sufficient statistics and is the most likely value to arise from the data. We define the confidence estimating function (CEF) as $H_{\bpsi}(\bbeta_0)= \Phi \left( \sqrt{N} \widehat{\bV}_{N, \bpsi}^{-1/2} \bPsi_N (\bbeta_0; \by) \right)$, where $\Phi(\cdot)$ is the $Jp$-variate standard normal distribution function. Then define the density of the CEF as
\begin{align}
h_{\bpsi}(\bbeta)&=\phi \left( \sqrt{N} \widehat{\bV}_{N, \bpsi}^{-1/2} \bPsi_N(\bbeta; \by) \right) \propto \exp \left\{ -\frac{N}{2} \bPsi_N^T (\bbeta; \by) \widehat{\bV}_{N, \bpsi}^{-1} \bPsi_N(\bbeta; \by) \right\},
\label{CEF-equation}
\end{align}
where $\phi(\cdot)$ is the $Jp$-variate standard normal density.  The CEF density has the advantage over the confidence density of not having a sandwich estimator for the variance, and thus not requiring the computation of a sensitivity matrix. It reflects the joint distribution of the $J$ estimating equations \eqref{block-EE-1}. Maximizing $h_{\bpsi}(\bbeta)$ yields the minimization defined in \eqref{def:combined-estimator}. The formulation in \eqref{CEF-equation} is different from the aggregated estimating equation approach proposed by \cite{Lin-Xi} for independent scalar responses.

\section{ASYMPTOTIC PROPERTIES}
\label{sec:asymptotics}

Let $\bv_{\bpsi}(\bbeta)=E_{\bbeta}\left(\bpsi_N(\bbeta; \by_i) \bpsi^T_N(\bbeta; \by_i) \right)$ be the positive definite variability matrix of $\bPsi_N$. Let $\left[ \bv^{-1}_{\bpsi} (\bbeta) \right]_{i,j}$ be the rows $(i-1)p+1$ to $ip$ and columns $(j-1)p+1$ to $jp$ of matrix $\bv^{-1}_{\bpsi} (\bbeta)$. We assume throughout that $\widehat{\bV}_{N, \bpsi}$ is nonsingular. Let $\left\| \cdot \right\|$ be the Euclidean norm. Let $\bbeta_0$, $\bgamma_{j0}$ the true values of $\bbeta$ and $\bgamma_j$, $j=1, \ldots, J$ respectively. Denote $\underline{\bgamma}=(\bgamma_1, \ldots, \bgamma_J)$, $\underline{\bgamma_0}=(\bgamma_{10}, \ldots, \bgamma_{J0})$. Let the variability and sensitivity matrices for block $j$ respectively be
\begin{align*}
\bv_{j, \bpsi_j} (\bbeta)&=Var_{\bbeta}\left\{ \sqrt{N} \bPsi_j(\bbeta; \by_j, \widehat{\bgamma_j}) \right\}=E_{\bbeta}\left\{ \bpsi_j(\bbeta; \by_{i,j}, \widehat{\bgamma_j}) \bpsi^T_j(\bbeta; \by_{i,j}, \widehat{\bgamma_j}) \right\},\\
\bs_{j, \bpsi_j}(\bbeta)&=-\nabla_{\bbeta} E_{\bbeta} \left\{ \bPsi_j(\bbeta; \by_j, \widehat{\bgamma_j}) \right\}=-\nabla_{\bbeta} E_{\bbeta} \left\{ \bpsi_j(\bbeta; \by_{i,j}, \widehat{\bgamma_j}) \right\}.
\end{align*}
Consistency of the block-specific MCLE's $\widehat{\bbeta_j}$ and $\widehat{\bgamma_j}$ is established in Theorem 3.4 of \cite{Song} when the estimating equation $\bPsi_j$ is unbiased at $(\bbeta_0, \bgamma_{j0})$ and its expectation has a unique zero at $(\bbeta_0, \bgamma_{j0})$. $\bPsi_j$ are unbiased if the bivariate marginals $f_j$ are correctly specified. Robustness of consistency to model misspecification is discussed by \cite{Xu-Reid}. They distinguish between misspecification of the full likelihood but correct specification of the pairwise likelihoods, and misspecification of the CL. In the former case, the MCLE is still consistent, whereas the MLE based on the full likelihood is not. In the latter case, the MCLE converges almost surely to the Kullback-Leibler divergence. Existing model diagnostics can help detect ill-posed model structures on the bivariate marginals.\\
As a GMM estimator, $\widehat{\bbeta_c}$ enjoys several key asymptotic properties for valid statistical inference under mild regularity conditions \ref{cond-weight-matrix}-\ref{cond-norm} listed in the Appendix, including consistency and asymptotic normality. We show in Lemma \ref{lemma:weight-matrix} that $\widehat{\bV}_{N, \bpsi}$ converges to the true variance of the estimating equations.
\begin{lemma}[Optimality]
\label{lemma:weight-matrix}
Under condition \ref{cond-weight-matrix}, $\widehat{\bV}_{N, \bpsi} \stackrel{p}{\rightarrow} \bv_{\bpsi}(\bbeta_0)$ as $N \rightarrow \infty$.
\end{lemma}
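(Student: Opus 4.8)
The plan is to recognise Lemma \ref{lemma:weight-matrix} as the standard ``consistency of the GMM weight matrix'' statement, complicated only by the fact that $\widehat{\bV}_{N,\bpsi}$ is an empirical second moment evaluated at the block MCLEs rather than at $\bbeta_0$ and $\underline{\bgamma_0}$. Introduce the matrix-valued summand
\[
\boldsymbol{m}(\by_i;\underline{\bbeta},\underline{\bgamma})=\left(\bpsi_1^T(\bbeta_1;\by_{i,1},\bgamma_1),\ldots,\bpsi_J^T(\bbeta_J;\by_{i,J},\bgamma_J)\right)^{T\,\otimes 2},
\]
so that $\widehat{\bV}_{N,\bpsi}=N^{-1}\sum_{i=1}^N \boldsymbol{m}(\by_i;\widehat{\bbeta}_{MCLE},\widehat{\bgamma}_{MCLE})$, while (identifying, inside the definition of $\bv_{\bpsi}$, the $\widehat{\bgamma}_{MCLE}$ with its probability limit $\underline{\bgamma_0}$) the target is $\bv_{\bpsi}(\bbeta_0)=E_{\bbeta_0}\boldsymbol{m}(\by_i;\underline{\bbeta_0},\underline{\bgamma_0})$ with $\underline{\bbeta_0}=(\bbeta_0,\ldots,\bbeta_0)$, the expectation being under the true data-generating distribution. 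On the event that $(\widehat{\bbeta}_{MCLE},\widehat{\bgamma}_{MCLE})$ lies in a fixed compact neighbourhood $K$ of $(\underline{\bbeta_0},\underline{\bgamma_0})$, I would bound
\[
\bigl\|\widehat{\bV}_{N,\bpsi}-\bv_{\bpsi}(\bbeta_0)\bigr\|
\le \sup_{(\underline{\bbeta},\underline{\bgamma})\in K}\Bigl\|\tfrac1N\textstyle\sum_i \boldsymbol{m}(\by_i;\underline{\bbeta},\underline{\bgamma})-E_{\bbeta_0}\boldsymbol{m}(\by_i;\underline{\bbeta},\underline{\bgamma})\Bigr\|
+\bigl\|E_{\bbeta_0}\boldsymbol{m}(\by_i;\widehat{\bbeta}_{MCLE},\widehat{\bgamma}_{MCLE})-E_{\bbeta_0}\boldsymbol{m}(\by_i;\underline{\bbeta_0},\underline{\bgamma_0})\bigr\|,
\]
and treat the pieces separately.

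First, the plug-in points are consistent: by Theorem 3.4 of \cite{Song}, since the pairwise score equations \eqref{block-EE-1}--\eqref{block-EE-2} are unbiased at $(\bbeta_0,\bgamma_{j0})$ with a unique zero there, $\widehat{\bbeta_j}\stackrel{p}{\rightarrow}\bbeta_0$ and $\widehat{\bgamma_j}\stackrel{p}{\rightarrow}\bgamma_{j0}$ for each $j$; stacking gives $(\widehat{\bbeta}_{MCLE},\widehat{\bgamma}_{MCLE})\stackrel{p}{\rightarrow}(\underline{\bbeta_0},\underline{\bgamma_0})$, hence $P\{(\widehat{\bbeta}_{MCLE},\widehat{\bgamma}_{MCLE})\in K\}\rightarrow1$ for any fixed neighbourhood $K$, which validates the bound with probability tending to one. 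Second, for the supremum term I would invoke a uniform law of large numbers (ULLN of Jennrich type): each entry of $\boldsymbol{m}$ is a finite sum of products $\psi_{j,a}(\bbeta_j;\by_{i,j},\bgamma_j)\,\psi_{k,b}(\bbeta_k;\by_{i,k},\bgamma_k)$ of composite-score components which, under the continuity and moment requirements gathered in condition \ref{cond-weight-matrix} (an $L^2$ envelope on each $\bpsi_j$ over $K$ together with a.s.\ continuity in $(\bbeta,\bgamma)$), is dominated in $L^1$ by Cauchy--Schwarz and is continuous in the parameters; the ULLN then applies on the compact $K$ and the supremum is $o_p(1)$. Third, the same envelope-plus-continuity conditions let dominated convergence give continuity of $(\underline{\bbeta},\underline{\bgamma})\mapsto E_{\bbeta_0}\boldsymbol{m}(\by_i;\underline{\bbeta},\underline{\bgamma})$ at $(\underline{\bbeta_0},\underline{\bgamma_0})$, so the second term is $o_p(1)$ by the continuous mapping theorem applied to the consistent plug-in. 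Combining yields $\widehat{\bV}_{N,\bpsi}\stackrel{p}{\rightarrow}\bv_{\bpsi}(\bbeta_0)$.

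The only genuine work is the uniform-convergence step and, upstream of it, checking that condition \ref{cond-weight-matrix} actually supplies an integrable envelope for the \emph{products} of composite scores (not merely the scores themselves) and the continuity needed to pass it through the expectation; the presence of the estimated nuisance parameters $\widehat{\bgamma_j}$ entering $\widehat{\bV}_{N,\bpsi}$ nonlinearly --- and, in the GLM/profile case, through $\widehat{\bbeta_j}$ --- is precisely what forces uniformity over a neighbourhood rather than a bare pointwise LLN. The MCLE consistency and the continuous-mapping argument are then routine. A small bookkeeping point to flag in the write-up is the identification inside $\bv_{\bpsi}$ of the $\widehat{\bgamma}_{MCLE}$ with $\underline{\bgamma_0}$, which is legitimate exactly because $\widehat{\bgamma}_{MCLE}\stackrel{p}{\rightarrow}\underline{\bgamma_0}$.
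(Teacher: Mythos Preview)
Your proposal is correct, but it takes a genuinely different route from the paper.

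The paper exploits the Lipschitz part of condition \ref{cond-weight-matrix} much more directly. Because $\left\|\bpsi_N(\bbeta_1;\by_i)-\bpsi_N(\bbeta_2;\by_i)\right\|\le C\left\|\bbeta_1-\bbeta_2\right\|$ with a constant $C$ that does not depend on $i$, consistency of the MCLEs yields $\bpsi_N(\widehat{\bbeta}_{MCLE};\by_i)=\bpsi_N(\bbeta_0;\by_i)+r_i$ with $\sup_i\left\|r_i\right\|=o_p(1)$. The outer product $\widehat{\bV}_{N,\bpsi}$ then expands into $N^{-1}\sum_i\bpsi_N(\bbeta_0;\by_i)\bpsi_N^T(\bbeta_0;\by_i)$ plus cross terms bounded by $o_p(1)\cdot N^{-1}\sum_i\left\|\bpsi_N(\bbeta_0;\by_i)\right\|$, which is $o_p(1)$ since the second moment at $\bbeta_0$ is finite. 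A single \emph{pointwise} law of large numbers at $\bbeta_0$ finishes the argument; no uniform LLN and no continuity-of-expectation step are needed.

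Your approach --- a Jennrich-type ULLN over a compact neighbourhood plus continuity of $(\underline{\bbeta},\underline{\bgamma})\mapsto E\boldsymbol{m}$ --- is the standard GMM device and does work under \ref{cond-weight-matrix}: the Lipschitz bound together with $E\left\|\bpsi_N(\bbeta_0;\by_i)\right\|^2<\infty$ gives $\left\|\bpsi_N(\bbeta;\by_i)\right\|\le\left\|\bpsi_N(\bbeta_0;\by_i)\right\|+C\,\mathrm{diam}(K)$ on $K$, which is a square-integrable envelope, so Cauchy--Schwarz delivers the $L^1$ envelope for the products you were worried about. The trade-off is that you invoke heavier machinery and a longer argument, whereas the paper's proof is essentially a three-line computation once the uniform-in-$i$ Lipschitz bound is noticed. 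On the other hand, your route is more robust (it would survive with mere continuity-plus-envelope instead of a global Lipschitz constant) and is more explicit about the nuisance parameters $\widehat{\bgamma}_{MCLE}$, which the paper absorbs into the notation $\bpsi_N(\bbeta;\by_i)$ without comment.
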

The proof of Lemma \ref{lemma:weight-matrix}, given in Appendix \ref{sec:appendix:1}, is straightforward, and makes use of the consistency of the MCLE's and the Central Limit Theorem. Lemma \ref{lemma:weight-matrix} shows our GMM estimator is Hansen optimal because we use a weighting matrix that converges to the true variance of the estimating equations. Asymptotically, $\widehat{\bbeta_c}$ does not lose any efficiency beyond what is lost by using CL. Since the pairwise CL is not a full likelihood, there are no general efficiency results about $\widehat{\bbeta_j}$. In unpublished work, it has been shown that the MCLE loses very little to the full likelihood estimator under several popular dependence structures such as compound symmetry, autoregressive and unstructured, and that the pairwise CL is robust to model misspecification and tends to outperform the full likelihood approach when the dependence structure of the data is misspecified (\cite{Jin}). This means DIMM generally performs well. In Theorems \ref{thm:consist} and \ref{thm:norm}, we show that $\widehat{\bbeta_c}$ is consistent and asymptotically normal under mild moment conditions.
\begin{theorem}[Consistency of $\widehat{\bbeta_c}$]
\label{thm:consist}
Given conditions \ref{cond-weight-matrix} and \ref{cond-consist}, $\widehat{\bbeta_c} \stackrel{p}{\rightarrow} \bbeta_0$ as $N \rightarrow \infty$. 
\end{theorem}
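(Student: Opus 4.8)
The plan is to verify the hypotheses of the standard consistency theorem for extremum (GMM) estimators: if a normalized version of the objective function converges uniformly in probability over a compact parameter space to a nonstochastic limit that is uniquely minimized at $\bbeta_0$, then the minimizer is consistent. Since the factor $N$ in \eqref{def:combined-estimator} does not affect the argmin, I would work with $\bar{Q}_N(\bbeta) = \bPsi_N^T(\bbeta; \by)\,\widehat{\bV}_{N, \bpsi}^{-1}\,\bPsi_N(\bbeta; \by)$. Define the population moment function $\bar{\bpsi}(\bbeta) = E_{\bbeta_0}\{\bpsi_N(\bbeta; \by_i)\}$ evaluated at the nuisance value $\underline{\bgamma_0}$, and the candidate limit $Q_0(\bbeta) = \bar{\bpsi}^T(\bbeta)\,\bv_{\bpsi}^{-1}(\bbeta_0)\,\bar{\bpsi}(\bbeta)$.

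The first substantive step is to show $\sup_{\bbeta}\|\bPsi_N(\bbeta; \by) - \bar{\bpsi}(\bbeta)\| \stackrel{p}{\rightarrow} 0$, which splits into two parts: (i) a uniform law of large numbers for $N^{-1}\sum_{i=1}^N \bpsi_N(\bbeta; \by_i, \underline{\bgamma_0})$ over the compact parameter set, which follows from the i.i.d.\ structure of the $\by_i$, continuity of the composite scores in $\bbeta$, and a dominating moment bound supplied by condition \ref{cond-consist}; and (ii) control of the plugged-in nuisance estimator, namely $\sup_{\bbeta}\|\bPsi_N(\bbeta; \by, \widehat{\bgamma}_{MCLE}) - \bPsi_N(\bbeta; \by, \underline{\bgamma_0})\| \stackrel{p}{\rightarrow} 0$, obtained via a first-order Taylor expansion in $\bgamma$ together with the consistency $\widehat{\bgamma}_{MCLE} \stackrel{p}{\rightarrow} \underline{\bgamma_0}$ (Theorem 3.4 of \cite{Song}) and a dominated-derivative bound. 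Combining (i)--(ii) with Lemma \ref{lemma:weight-matrix} and the continuous mapping theorem yields $\sup_{\bbeta}|\bar{Q}_N(\bbeta) - Q_0(\bbeta)| \stackrel{p}{\rightarrow} 0$.

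Next I would verify identification. Because each block estimating equation $\bPsi_j$ is unbiased at $(\bbeta_0, \bgamma_{j0})$, we have $\bar{\bpsi}(\bbeta_0) = 0$, hence $Q_0(\bbeta_0) = 0$; since $\bv_{\bpsi}(\bbeta_0)$ is positive definite, $Q_0(\bbeta) \geq 0$ with equality if and only if $\bar{\bpsi}(\bbeta) = 0$, and the assumption (part of condition \ref{cond-consist}) that $E_{\bbeta_0}\{\bPsi_j\}$ has $\bbeta_0$ as its unique zero forces $\bbeta = \bbeta_0$. Continuity of $Q_0$ and compactness of the parameter space (condition \ref{cond-consist}) complete the hypotheses of the extremum-estimator argument, which then delivers $\widehat{\bbeta_c} \stackrel{p}{\rightarrow} \bbeta_0$.

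The main obstacle is step (ii): the stacked estimating function $\bPsi_N$ carries the data-dependent nuisance estimator $\widehat{\bgamma}_{MCLE}$, so uniform convergence cannot be read off a textbook uniform law of large numbers directly and instead requires a stochastic-equicontinuity (or Lipschitz-in-$\bgamma$, uniformly in $\bbeta$) argument; this is precisely what the regularity conditions in the Appendix are designed to guarantee. Everything else --- passing from pointwise to uniform convergence and checking that the quadratic limit has a unique minimizer --- is routine once uniform convergence and identification are established.
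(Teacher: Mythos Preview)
Your approach is essentially the paper's: show uniform convergence of the normalized GMM objective to $Q_0(\bbeta)$, verify that $Q_0$ is uniquely minimized at $\bbeta_0$, and invoke Theorem 2.1 of \cite{Newey-McFadden}. The only cosmetic difference is that the paper does not split the argument into a ULLN step and a nuisance plug-in step; instead it performs a direct add-and-subtract decomposition of $\frac{1}{N}Q_N(\bbeta)-Q_0(\bbeta)$ into three pieces bounded respectively by $\|\bPsi_N-E_{\bbeta}\bPsi_N\|^2\|\widehat{\bV}_{N,\bpsi}^{-1}\|$, a cross term, and $\|E_{\bbeta}\bPsi_N\|^2\|\widehat{\bV}_{N,\bpsi}^{-1}-\bv_{\bpsi}^{-1}(\bbeta_0)\|$, then appeals to Lemma \ref{lemma:weight-matrix}.

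One bookkeeping correction: you repeatedly credit condition \ref{cond-consist} with ingredients it does not supply. Condition \ref{cond-consist} only asserts that $Q_0$ is twice continuously differentiable in a neighbourhood of $\bbeta_0$; the unique zero of $E_{\bbeta_0}\bPsi_N$ is part of condition \ref{cond-weight-matrix}, compactness of $\Theta$ is stated at the head of the Appendix, and no global dominating-envelope condition for a ULLN is listed explicitly (the Lipschitz bound in \ref{cond-weight-matrix} is only local). Your flagging of the nuisance plug-in $\widehat{\bgamma}_{MCLE}$ as the delicate step is apt---the paper's proof does not separate it out---but the Taylor-in-$\bgamma$ argument you propose would likewise need a regularity hypothesis (a Lipschitz or dominated-derivative condition in $\bgamma$) that is not among \ref{cond-weight-matrix}--\ref{cond-consist} as written.
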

The proof of Theorem \ref{thm:consist}, given in Appendix \ref{sec:appendix:1}, derives from the consistency of the GMM estimator due to \cite{Hansen} and, more generally, to \cite{Newey-McFadden}. 
\begin{theorem}[Asymptotic normality of $\widehat{\bbeta_c}$]
\label{thm:norm}
Given conditions \ref{cond-weight-matrix}, \ref{cond-consist} and \ref{cond-norm}, \\$\sqrt{N} \left( \widehat{\bbeta_c}-\bbeta_0 \right) \stackrel{d}{\rightarrow} \mathcal{N} \left(0, \bj_{\bpsi}^{-1}(\bbeta_0) \right)$ as $N \rightarrow \infty$, where the Godambe information of $\bPsi_N(\bbeta; \by)$ can be rewritten as $\bj_{\bpsi}(\bbeta)=\bs^T_{\bpsi}(\bbeta) \bv^{-1}_{\bpsi}(\bbeta) \bs_{\bpsi}(\bbeta)=\sum \limits_{i,j=1}^J \bs_{i, \bpsi_i}(\bbeta) \allowbreak \left[ \bv^{-1}_{\bpsi} (\bbeta) \right]_{i,j} \allowbreak \bs_{j, \bpsi_j}(\bbeta)$.
\end{theorem}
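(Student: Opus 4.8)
\section*{Proof proposal for Theorem \ref{thm:norm}}

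The plan is to run the standard GMM asymptotic-normality argument of \cite{Hansen} and \cite{Newey-McFadden}, adapted to the facts that $\bPsi_N$ is a \emph{stacked} vector of pairwise composite scores and that it carries the plug-in nuisance estimators $\widehat{\bgamma}_{MCLE}$. First, since $\widehat{\bbeta_c}\stackrel{p}{\rightarrow}\bbeta_0$ by Theorem \ref{thm:consist}, with probability tending to one $\widehat{\bbeta_c}$ lies in the interior of the parameter space and satisfies the first-order condition $\nabla_{\bbeta}\bPsi_N^T(\widehat{\bbeta_c};\by)\,\widehat{\bV}_{N,\bpsi}^{-1}\,\bPsi_N(\widehat{\bbeta_c};\by)=0$, where $\nabla_{\bbeta}\bPsi_N$ denotes the $Jp\times p$ Jacobian. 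A componentwise mean-value expansion of $\bPsi_N(\cdot;\by)$ about $\bbeta_0$ gives $\bPsi_N(\widehat{\bbeta_c};\by)=\bPsi_N(\bbeta_0;\by)+\nabla_{\bbeta}\bPsi_N(\bar{\bbeta};\by)(\widehat{\bbeta_c}-\bbeta_0)$ with each intermediate point $\bar{\bbeta}$ on the segment between $\widehat{\bbeta_c}$ and $\bbeta_0$; substituting and rearranging yields
\[
\sqrt{N}(\widehat{\bbeta_c}-\bbeta_0)=-\Big[\nabla_{\bbeta}\bPsi_N(\widehat{\bbeta_c};\by)^T\widehat{\bV}_{N,\bpsi}^{-1}\nabla_{\bbeta}\bPsi_N(\bar{\bbeta};\by)\Big]^{-1}\nabla_{\bbeta}\bPsi_N(\widehat{\bbeta_c};\by)^T\widehat{\bV}_{N,\bpsi}^{-1}\,\sqrt{N}\bPsi_N(\bbeta_0;\by).
\]

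Next I would pin down the limits of the three ingredients. Lemma \ref{lemma:weight-matrix} gives $\widehat{\bV}_{N,\bpsi}^{-1}\stackrel{p}{\rightarrow}\bv_{\bpsi}^{-1}(\bbeta_0)$. A uniform law of large numbers under the appendix conditions, combined with $\widehat{\bbeta_c}\stackrel{p}{\rightarrow}\bbeta_0$ (hence $\bar{\bbeta}\stackrel{p}{\rightarrow}\bbeta_0$) and consistency of $\widehat{\bgamma}_{MCLE}$, gives $\nabla_{\bbeta}\bPsi_N(\widehat{\bbeta_c};\by)\stackrel{p}{\rightarrow}-\bs_{\bpsi}(\bbeta_0)$ and the same at $\bar{\bbeta}$, where $\bs_{\bpsi}(\bbeta_0)$ is the $Jp\times p$ stacked sensitivity whose $j$-th $p\times p$ block is $\bs_{j,\bpsi_j}(\bbeta_0)$. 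Finally, with $\widehat{\bgamma}_{MCLE}$ momentarily replaced by $\underline{\bgamma_0}$, $\bPsi_N(\bbeta_0;\by)=N^{-1}\sum_{i=1}^N\bpsi_N(\bbeta_0;\by_i)$ is an average of i.i.d.\ mean-zero vectors with variance $\bv_{\bpsi}(\bbeta_0)$ (unbiasedness of the pairwise CL equations when the bivariate margins are correctly specified), so the multivariate CLT gives $\sqrt{N}\bPsi_N(\bbeta_0;\by)\stackrel{d}{\rightarrow}\mathcal{N}(0,\bv_{\bpsi}(\bbeta_0))$.

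Combining these by Slutsky's theorem and writing $\bj_{\bpsi}(\bbeta_0)=\bs_{\bpsi}^T(\bbeta_0)\bv_{\bpsi}^{-1}(\bbeta_0)\bs_{\bpsi}(\bbeta_0)$ (assumed nonsingular, i.e.\ $\bs_{\bpsi}$ of full column rank, among the appendix conditions), the leading term is $\bj_{\bpsi}^{-1}(\bbeta_0)\bs_{\bpsi}^T(\bbeta_0)\bv_{\bpsi}^{-1}(\bbeta_0)\sqrt{N}\bPsi_N(\bbeta_0;\by)$, which is asymptotically normal with mean zero and covariance $\bj_{\bpsi}^{-1}\bs_{\bpsi}^T\bv_{\bpsi}^{-1}\,\bv_{\bpsi}\,\bv_{\bpsi}^{-1}\bs_{\bpsi}\bj_{\bpsi}^{-1}=\bj_{\bpsi}^{-1}(\bbeta_0)$; the would-be sandwich collapses precisely because, by Lemma \ref{lemma:weight-matrix}, the GMM weight is asymptotically the optimal choice $\bv_{\bpsi}^{-1}$. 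The block form $\bj_{\bpsi}(\bbeta)=\sum_{i,j=1}^J\bs_{i,\bpsi_i}(\bbeta)[\bv_{\bpsi}^{-1}(\bbeta)]_{i,j}\bs_{j,\bpsi_j}(\bbeta)$ then follows by expanding $\bs_{\bpsi}^T\bv_{\bpsi}^{-1}\bs_{\bpsi}$ in terms of the $p\times p$ blocks $[\bv_{\bpsi}^{-1}]_{i,j}$ and the block structure of $\bs_{\bpsi}$ inherited from the partition of $\bY_i$.

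I expect the real work, and the main obstacle, to be justifying that the plug-in $\widehat{\bgamma}_{MCLE}$ does not alter the limiting law used in the CLT step. I would control this by a further mean-value expansion of each $\bPsi_j(\bbeta_0;\by_j,\widehat{\bgamma_j})$ in $\bgamma_j$ around $\bgamma_{j0}$, using $\sqrt{N}(\widehat{\bgamma_j}-\bgamma_{j0})=O_p(1)$ (from the consistency results cited from \cite{Song} together with the appendix conditions) and a stochastic-equicontinuity argument for the remainder, so that $\sqrt{N}\bPsi_N(\bbeta_0;\by)$ equals its $\underline{\bgamma_0}$-version plus $\sum_{j=1}^J\nabla_{\bgamma_j}E_{\bbeta_0}\{\bPsi_j(\bbeta_0;\cdot,\bgamma_{j0})\}\sqrt{N}(\widehat{\bgamma_j}-\bgamma_{j0})+o_p(1)$; this extra term is negligible under the mean--covariance orthogonality of the bivariate Gaussian (and, more generally, exponential-family) margins used here, so that $\bv_{\bpsi}(\bbeta_0)$ and $\bs_{j,\bpsi_j}(\bbeta_0)$ may be taken evaluated at $\bgamma_{j0}$ as in their definitions. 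Everything else --- interiority, differentiability of $\bPsi_N$, the ULLN, and nonsingularity of the relevant matrices --- is routine bookkeeping handled by the regularity conditions \ref{cond-weight-matrix}--\ref{cond-norm}.
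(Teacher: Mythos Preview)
Your argument is the classical \emph{smooth} GMM route (first-order condition plus a mean-value expansion of the sample moment function), which is essentially Theorem~3.4 in \cite{Newey-McFadden}. The paper instead invokes Theorem~7.2 of \cite{Newey-McFadden}, the \emph{non-smooth} version built on stochastic equicontinuity. This is not a cosmetic difference: conditions \ref{cond-weight-matrix}--\ref{cond-norm} are written so that only the \emph{population} objects $E_{\bbeta}\bPsi_N(\bbeta;\bY)$ and $Q_0$ are assumed differentiable, and the paper stresses explicitly that ``Theorems \ref{thm:consist} and \ref{thm:norm} do not require the differentiability of $\bPsi_j$ and $Q_N$''. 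Your step ``a componentwise mean-value expansion of $\bPsi_N(\cdot;\by)$'' and the use of $\nabla_{\bbeta}\bPsi_N(\widehat{\bbeta_c};\by)$ therefore assume something the stated conditions do not give you; under \ref{cond-weight-matrix}--\ref{cond-norm} that step is unjustified (think of the quantile-regression example the paper mentions). Condition \ref{cond-norm} is precisely the substitute: it lets Theorem~7.2 linearize $\bPsi_N$ around $\bbeta_0$ \emph{via its population derivative} $-\bs_{\bpsi}(\bbeta_0)$ without ever differentiating the sample moments, and then the rest of your Slutsky/sandwich-collapse calculation goes through verbatim.

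A secondary point: you flag the plug-in $\widehat{\bgamma}_{MCLE}$ as the ``main obstacle'' and propose an extra expansion in $\bgamma_j$ plus an orthogonality argument. The paper sidesteps this entirely by defining $\bPsi_N$ (and hence $\bv_{\bpsi}$, $\bs_{\bpsi}$, and condition \ref{cond-norm}) \emph{with the plug-in already inside}, so the stochastic-equicontinuity hypothesis in \ref{cond-norm} is assumed to hold for the profiled score directly. Your orthogonality argument is a reasonable way to \emph{verify} \ref{cond-norm} in the Gaussian-margin case, but it is not part of the proof of Theorem~\ref{thm:norm} as stated.
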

The proof of Theorem \ref{thm:norm} follows from Theorem 7.2 in \cite{Newey-McFadden} and Theorem \ref{thm:consist}. Theorems \ref{thm:consist} and \ref{thm:norm} do not require the differentiability of $\bPsi_j$ and $Q_N$. Instead, they require the differentiability of their population versions, and that $\bPsi_N$ behave ``nicely'' in a neighbourhood of $\bbeta_0$. These conditions allow us, for example, to do quantile regression. These theoretical results provide a framework for constructing asymptotic confidence intervals and conducting Wald tests, so that we can perform inference for $\bbeta$ when $M$ is very large. Using an optimal weight matrix improves statistical power so DIMM can detect signals other methods may miss.\\
So far, we have been vague about how the data partition should be done, only suggesting it be done according to established scientific knowledge. There may be some uncertainty about how to partition data, which we discuss in Section \ref{sec:discussion}. A formal approach to testing if the data split was done appropriately can be interpreted as a test of the over-identifying restrictions: if the blocks are improperly specified (in number, size, etc.), the estimating equation $\bPsi_N$ will have mismatched moment restrictions on $\bbeta$. Formally, we can show that $Q_N$ evaluated at $\widehat{\bbeta_c}$ follows a chi-squared distribution with $(J-1)p$ degrees of freedom.
\begin{theorem}[Test of over-identifying restrictions]
\label{thm:test}
Let $\widehat{\bbeta_c}=\arg \min \limits_{\bbeta} Q_N(\bbeta)$. Given conditions \ref{cond-weight-matrix}, \ref{cond-consist} and \ref{cond-norm}, $Q_N(\widehat{\bbeta_c}) \stackrel{d}{\rightarrow} \chi^2_{(J-1)p}$ as $N \rightarrow \infty$.
\end{theorem}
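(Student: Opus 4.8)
The plan is to follow the classical Hansen/Sargan argument for the $J$-test of over-identifying restrictions, adapted to our stacked composite-score setting. First I would introduce the normalized object $\bZ_N = \sqrt{N}\,\widehat{\bV}_{N,\bpsi}^{-1/2}\bPsi_N(\bbeta_0;\by)$, which by the CLT (and the consistency of $\widehat{\bbeta}_{MCLE}$, $\widehat{\bgamma}_{MCLE}$, plus a standard expansion absorbing the estimation error in $\widehat{\bgamma}_{MCLE}$ into the variability matrix — this is exactly why conditions \ref{cond-weight-matrix}--\ref{cond-norm} are imposed) converges in distribution to a $\mathcal N(0, I_{Jp})$ vector. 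Lemma \ref{lemma:weight-matrix} guarantees $\widehat{\bV}_{N,\bpsi}\stackrel{p}{\to}\bv_{\bpsi}(\bbeta_0)$, so the square-root normalization is asymptotically legitimate. The quantity $Q_N(\bbeta_0) = \bZ_N^T \bZ_N$ would then be $\chi^2_{Jp}$ in the limit; the whole content of the theorem is that plugging in $\widehat{\bbeta_c}$ instead of $\bbeta_0$ costs exactly $p$ degrees of freedom.

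Next I would linearize. Let $\bs_{\bpsi}=\bs_{\bpsi}(\bbeta_0)$ denote the $Jp\times p$ sensitivity matrix (stacking the $\bs_{j,\bpsi_j}$), let $\bv_{\bpsi}=\bv_{\bpsi}(\bbeta_0)$, and write $\widetilde{\bs}=\bv_{\bpsi}^{-1/2}\bs_{\bpsi}$, the ``whitened'' gradient. A mean-value / stochastic-equicontinuity expansion of $\bPsi_N$ around $\bbeta_0$, together with the first-order condition for $\widehat{\bbeta_c}$, gives
\begin{align*}
\sqrt{N}\bigl(\widehat{\bbeta_c}-\bbeta_0\bigr) &= \bigl(\widetilde{\bs}^T\widetilde{\bs}\bigr)^{-1}\widetilde{\bs}^T \bZ_N + o_p(1) = \bj_{\bpsi}^{-1}(\bbeta_0)\,\bs_{\bpsi}^T\bv_{\bpsi}^{-1/2}\bZ_N + o_p(1),
\end{align*}
using $\bj_{\bpsi}(\bbeta_0)=\bs_{\bpsi}^T\bv_{\bpsi}^{-1}\bs_{\bpsi}=\widetilde{\bs}^T\widetilde{\bs}$ from Theorem \ref{thm:norm}. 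Substituting back, $\sqrt{N}\,\widehat{\bV}_{N,\bpsi}^{-1/2}\bPsi_N(\widehat{\bbeta_c};\by) = \bigl(I_{Jp}-\widetilde{\bs}(\widetilde{\bs}^T\widetilde{\bs})^{-1}\widetilde{\bs}^T\bigr)\bZ_N + o_p(1) =: \bP\,\bZ_N+o_p(1)$, where $\bP$ is the orthogonal projection onto the orthocomplement of the column space of $\widetilde{\bs}$, a symmetric idempotent matrix of rank $Jp-p=(J-1)p$. (Here I must be careful that $\widehat{\bV}_{N,\bpsi}$ and $\bv_{\bpsi}$ appear interchangeably only up to $o_p(1)$ terms, justified again by Lemma \ref{lemma:weight-matrix} and continuity of the matrix square root on the positive-definite cone; a subtlety is that $\widehat{\bV}_{N,\bpsi}$ is evaluated at the MCLEs rather than at $\widehat{\bbeta_c}$, but consistency of both makes the difference negligible.)

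Finally, $Q_N(\widehat{\bbeta_c}) = \|\,\sqrt{N}\,\widehat{\bV}_{N,\bpsi}^{-1/2}\bPsi_N(\widehat{\bbeta_c};\by)\,\|^2 = \bZ_N^T\bP\bZ_N + o_p(1)$. Since $\bZ_N\stackrel{d}{\to}\mathcal N(0,I_{Jp})$ and $\bP$ is a fixed (deterministic, by the regularity conditions on the population quantities) projection of rank $(J-1)p$, the continuous mapping theorem and the standard fact that a quadratic form $W^TPW$ with $W\sim\mathcal N(0,I)$ and $P$ idempotent of rank $k$ is $\chi^2_k$ give $Q_N(\widehat{\bbeta_c})\stackrel{d}{\to}\chi^2_{(J-1)p}$, as claimed. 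I expect the main obstacle to be the linearization step — specifically, showing the expansion of $\bPsi_N$ is uniformly valid in a shrinking neighbourhood of $\bbeta_0$ without assuming differentiability of $\bPsi_N$ itself (only its population version), which requires a stochastic-equicontinuity argument of the type in Theorem 7.2 of \cite{Newey-McFadden}, and tracking that the nuisance-parameter estimation error from $\widehat{\bgamma}_{MCLE}$ does not perturb the limiting distribution. Everything else is bookkeeping with projection matrices.
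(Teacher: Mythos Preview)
Your proposal is correct and follows the classical Hansen--Sargan argument for the $J$-test: normalize the stacked score at $\bbeta_0$, linearize via the stochastic-equicontinuity condition \ref{cond-norm} and the first-order condition for $\widehat{\bbeta_c}$, and recognize $Q_N(\widehat{\bbeta_c})$ as (asymptotically) a quadratic form in a standard normal vector through the rank-$(J-1)p$ projection $\bP=I_{Jp}-\widetilde{\bs}(\widetilde{\bs}^T\widetilde{\bs})^{-1}\widetilde{\bs}^T$. The paper defers its proof to the Supplemental Material, but given that its entire large-sample framework is built on \cite{Hansen} and Theorem~7.2 of \cite{Newey-McFadden}, this is exactly the route one would expect; your identification of the two delicate points---the non-differentiable linearization handled by \ref{cond-norm}, and the asymptotic negligibility of plugging in $\widehat{\bgamma}_{MCLE}$---matches the paper's own emphasis in the surrounding discussion.
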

The proof is given in the Supplemental Material. DIMM has the advantage of an objective function that allows for formal testing, whereas GEE model selection relies on information criteria that can be subjective. The test can also be thought of as a test of the homogeneity assumption on the mean parameter $\bbeta$, since the model $g(\bmu_i)=\bX_i \bbeta$ translates into moment restrictions on $\bbeta$. Unfortunately, it may be difficult to tell if invalid moment restrictions stem from an inappropriate data split or incorrect model specification. Residual analysis for model diagnostics can remove doubt in the latter case. 

\section{IMPLEMENTATION: the parallelized one-step estimator}
\label{sec:implementation}

In practice, searching for the integrated solution of the GMM equation \eqref{def:combined-estimator} can be very slow. Iterative methods must repeatedly evaluate $\bPsi_N(\bbeta; \by)$ at each candidate $\bbeta$, which requires the computation of the pairwise CL from each block at every iteration. We propose a meta-estimator of $\bbeta$ that delivers a one-step update via a linear function of MCLE's $\widehat{\bbeta_j}$. Our derivation of the one-step estimator is rooted in asymptotic properties of the estimating equations $\bPsi_j$ and $\bPsi_N$ in a neighbourhood of $\bbeta_0$, in a similar spirit to Newton-Raphson. Let $\bs_{j, \bpsi_j}(\bbeta)=-\nabla_{\bbeta} E_{\bbeta} \bpsi_j (\bbeta; \by_{i,j}, \widehat{\bgamma_j})$. Since $\bPsi_j (\widehat{\bbeta_j} ; \by_j, \widehat{\bgamma_j})=0$, under some regularity conditions we can approximate $\bPsi_j(\widehat{\bbeta_c}; \by_j, \widehat{\bgamma_j}) \approx \bPsi_j (\widehat{\bbeta_j}; \by_j, \widehat{\bgamma_j}) - \bs_{j, \bpsi_j}(\widehat{\bbeta_j}) (\widehat{\bbeta_c} - \widehat{\bbeta_j}) =- \bs_{j, \bpsi_j}(\widehat{\bbeta_j}) (\widehat{\bbeta_c} - \widehat{\bbeta_j})$, implying
\begin{align*}
\bPsi_N (\widehat{\bbeta_c}; \by) &\approx \left( \begin{array}{c}
\bs_{1, \bpsi_1} (\widehat{\bbeta_1})(\widehat{\bbeta_c} - \widehat{\bbeta_1}) \\
\vdots \\
\bs_{J, \bpsi_J}(\widehat{\bbeta_J}) (\widehat{\bbeta_c} - \widehat{\bbeta_J})
\end{array} \right)
=-\mbox{diag} \left\{ \bs_j (\widehat{\bbeta_j}) \right\}_{j=1}^J \left( \begin{array}{c}
\widehat{\bbeta_c} - \widehat{\bbeta_1} \\
\vdots \\
\widehat{\bbeta_c} - \widehat{\bbeta_J}
\end{array} \right).
\end{align*}
Define $\bs_{\bpsi}(\bbeta) = -\nabla_{\bbeta} E_{\bbeta} \bpsi_N(\bbeta; \by_i)$, so that $\bs_{\bpsi}(\widehat{\bbeta_c})=-\nabla_{\bbeta} E_{\bbeta,} \bpsi_N(\bbeta; \by_i) \rvert_{\bbeta = \widehat{\bbeta_c}}$ is $Jp \times p$ dimensional. As the minimizer of $Q_N(\bbeta)$, $\widehat{\bbeta_c}$ satisfies
\begin{align}
0 &= \bs^T_{\bpsi} (\widehat{\bbeta_c}) \widehat{\bV}^{-1}_{N, \bpsi} \bPsi_N(\widehat{\bbeta_c}; \by) \approx \bs^T_{\bpsi}(\widehat{\bbeta_c}) \widehat{\bV}^{-1}_{N, \bpsi} \mbox{diag} \left\{ \bs_j (\widehat{\bbeta_j}) \right\}_{j=1}^J \left( \begin{array}{c}
\widehat{\bbeta_c} - \widehat{\bbeta_1} \\
\vdots \\
\widehat{\bbeta_c} - \widehat{\bbeta_J}
\end{array} \right).
\label{combined-EE-approx}
\end{align}
Let $\left[ \widehat{\bV}^{-1}_{N, \bpsi} \right]_{i,j}$ be the rows $(i-1)p+1$ to $ip$ and columns $(j-1)p+1$ to $jp$ of matrix $\widehat{\bV}^{-1}_{N, \bpsi}$. Solving \eqref{combined-EE-approx} for $\widehat{\bbeta_c}$ leads to a closed-form expression given by:
\begin{align*}
\widehat{\bbeta_c}&\approx \left\{ \sum \limits_{i,j=1}^J \bs_{i, \bpsi_i}(\widehat{\bbeta_c}) \left[ \widehat{\bV}^{-1}_{N, \bpsi} \right]_{i,j} \bs_{j, \bpsi_j}(\widehat{\bbeta_j}) \right\}^{-1} \sum \limits_{i,j=1}^J \bs_{i, \bpsi_i}(\widehat{\bbeta_c}) \left[ \widehat{\bV}^{-1}_{N, \bpsi} \right]_{i,j} \bs_{j, \bpsi_j} (\widehat{\bbeta_j}) \widehat{\bbeta_j}.
\end{align*}
Let $\bS_{j, \bpsi_j}(\bbeta; \by_j)$ be a $\sqrt{N}$-consistent sample estimate of $\bs_{j, \bpsi_j}(\bbeta)$. Since $\widehat{\bbeta_j}$ and $\widehat{\bbeta_c}$ are both consistent for $\bbeta$, we substitute $\widehat{\bbeta_j}$ for $\widehat{\bbeta_c}$ to obtain a one-step estimator of $\bbeta$:
\begin{equation}
\resizebox{.941 \textwidth}{!} 
{
    $\widehat{\bbeta}_{DIMM}=\left( \sum \limits_{i,j=1}^J \bS_{i, \bpsi_i} (\widehat{\bbeta_i}; \by_i) \left[ \widehat{\bV}^{-1}_{N, \bpsi} \right]_{i,j} \bS_{j, \bpsi_j} (\widehat{\bbeta_j}; \by_j) \right)^{-1} \sum \limits_{i,j=1}^J \bS_{i, \bpsi_i} (\widehat{\bbeta_i}; \by_i) \left[ \widehat{\bV}^{-1}_{N, \bpsi} \right]_{i,j} \bS_{j, \bpsi_j} (\widehat{\bbeta_j}; \by_j) \widehat{\bbeta_j}$.
\label{one-step-estimator}
}
\end{equation}
With $\widehat{\bbeta}_{DIMM}$ in \eqref{one-step-estimator}, DIMM can be implemented in a fully parallelized and scalable computational scheme following, for example, the MapReduce paradigm on the Hadoop platform, where only one pass through each block of data is required. These passes can be run on parallel CPUs, and return values of summary statistics $\left( \widehat{\bbeta_j}, \bpsi_j(\widehat{\bbeta_j}; \by_{i,j}, \widehat{\bgamma_j}), \bS_{j, \bpsi_j}(\widehat{\bbeta_j}; \by_j) \right)_{j=1}^J$. After computing $\widehat{\bV}_{N, \bpsi}$ as a function of these summary statistics, computation of $\widehat{\bbeta}_{DIMM}$ in \eqref{one-step-estimator} can be done in one step. Big data stored on several servers never need be combined, meaning DIMM can be run on distributed correlated response data. $\widehat{\bbeta}_{DIMM}$ can also be used for sub-group analyses, as in Section \ref{sec:data}, to combine estimates from specific sub-groups of interest. To our knowledge, DIMM is the first method able to analyze high-dimensional distributed correlated data in a computationally fast and statistically efficient way.\\
We show in Theorem \ref{thm:dist-equiv} that the one-step estimator $\widehat{\bbeta}_{DIMM}$ in \eqref{one-step-estimator} has the same asymptotic distribution as and is asymptotically equivalent to $\widehat{\bbeta_c}$.
\begin{theorem}
\label{thm:dist-equiv}
Given conditions \ref{cond-weight-matrix}, \ref{cond-consist}, \ref{cond-norm} and \ref{cond-equiv}, $\widehat{\bbeta}_{DIMM}$ and $\widehat{\bbeta_c}$ have the same asymptotic distribution: $\sqrt{N} \left( \widehat{\bbeta}_{DIMM}-\bbeta_0 \right) \stackrel{d}{\rightarrow} \mathcal{N} \left(0, \bj_{\bpsi}^{-1}(\bbeta_0) \right)$ as $N \rightarrow \infty$. Moreover, $\widehat{\bbeta_c}$ and $\widehat{\bbeta}_{DIMM}$ are asymptotically equivalent: $\left\| \widehat{\bbeta}_{DIMM}-\widehat{\bbeta_c} \right\| \stackrel{p}{\rightarrow} 0$ as $N \rightarrow \infty$.
\end{theorem}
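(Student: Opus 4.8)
The plan is to establish that $\widehat{\bbeta}_{DIMM}$ and $\widehat{\bbeta_c}$ share the same asymptotic linear (influence-function) representation
\[
\sqrt{N}\bigl(\widehat{\bbeta}_{DIMM}-\bbeta_0\bigr)=\bj_{\bpsi}^{-1}(\bbeta_0)\,\bs^T_{\bpsi}(\bbeta_0)\,\bv^{-1}_{\bpsi}(\bbeta_0)\,\sqrt{N}\,\bPsi_N(\bbeta_0;\by)+o_p(1),
\]
whose right-hand side is precisely the expansion of $\sqrt{N}(\widehat{\bbeta_c}-\bbeta_0)$ already obtained in the proof of Theorem \ref{thm:norm} via Theorem 7.2 of \cite{Newey-McFadden}. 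Both claims then follow at once: subtracting the two representations gives $\sqrt{N}(\widehat{\bbeta}_{DIMM}-\widehat{\bbeta_c})=o_p(1)$, hence a fortiori $\|\widehat{\bbeta}_{DIMM}-\widehat{\bbeta_c}\|\stackrel{p}{\rightarrow}0$; and applying the CLT to $\sqrt{N}\bPsi_N(\bbeta_0;\by)$ (limiting variance $\bv_{\bpsi}(\bbeta_0)$) together with Slutsky's theorem and the identity $\bj_{\bpsi}^{-1}\bs^T_{\bpsi}\bv^{-1}_{\bpsi}\bv_{\bpsi}\bv^{-1}_{\bpsi}\bs_{\bpsi}\bj_{\bpsi}^{-1}=\bj_{\bpsi}^{-1}$ yields $\sqrt{N}(\widehat{\bbeta}_{DIMM}-\bbeta_0)\stackrel{d}{\rightarrow}\mathcal{N}(0,\bj_{\bpsi}^{-1}(\bbeta_0))$.

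To obtain the representation for $\widehat{\bbeta}_{DIMM}$, I would first note that each block MCLE is an exactly identified $Z$-estimator: since $\bPsi_j(\widehat{\bbeta_j};\by_j,\widehat{\bgamma_j})=0$, under conditions \ref{cond-consist} and \ref{cond-norm} the usual expansion gives $\sqrt{N}(\widehat{\bbeta_j}-\bbeta_0)=\bs_{j,\bpsi_j}^{-1}(\bbeta_0)\sqrt{N}\bPsi_j(\bbeta_0;\by_j,\widehat{\bgamma_j})+o_p(1)$. Next, write $\widehat{\bbeta}_{DIMM}=\widehat{\bA}_N^{-1}\sum_{j=1}^J\widehat{\bB}_{N,j}\widehat{\bbeta_j}$ with $\widehat{\bA}_N=\sum_{i,j}\bS_{i,\bpsi_i}(\widehat{\bbeta_i};\by_i)[\widehat{\bV}^{-1}_{N,\bpsi}]_{i,j}\bS_{j,\bpsi_j}(\widehat{\bbeta_j};\by_j)$ and $\widehat{\bB}_{N,j}=\sum_{i}\bS_{i,\bpsi_i}(\widehat{\bbeta_i};\by_i)[\widehat{\bV}^{-1}_{N,\bpsi}]_{i,j}\bS_{j,\bpsi_j}(\widehat{\bbeta_j};\by_j)$; then $\sum_j\widehat{\bB}_{N,j}=\widehat{\bA}_N$, so $\sum_j\widehat{\bA}_N^{-1}\widehat{\bB}_{N,j}=I_p$ and $\sqrt{N}(\widehat{\bbeta}_{DIMM}-\bbeta_0)=\sum_j\widehat{\bA}_N^{-1}\widehat{\bB}_{N,j}\sqrt{N}(\widehat{\bbeta_j}-\bbeta_0)$. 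By Lemma \ref{lemma:weight-matrix}, continuity of matrix inversion, the $\sqrt{N}$-consistency of $\bS_{j,\bpsi_j}$ assumed in condition \ref{cond-equiv}, and consistency of the $\widehat{\bbeta_j}$, I get $\widehat{\bA}_N\stackrel{p}{\rightarrow}\bj_{\bpsi}(\bbeta_0)$ and $\widehat{\bB}_{N,j}\bs_{j,\bpsi_j}^{-1}(\bbeta_0)\stackrel{p}{\rightarrow}\sum_i\bs_{i,\bpsi_i}(\bbeta_0)[\bv^{-1}_{\bpsi}(\bbeta_0)]_{i,j}$; substituting the block expansions and stacking over $j$, together with $\sum_{i,j}\bs_{i,\bpsi_i}[\bv^{-1}_{\bpsi}]_{i,j}\bPsi_j(\bbeta_0)=\bs^T_{\bpsi}\bv^{-1}_{\bpsi}\bPsi_N(\bbeta_0)$, reproduces the displayed representation. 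Invertibility of $\bj_{\bpsi}(\bbeta_0)$, hence of $\widehat{\bA}_N$ with probability tending to one, comes from positive definiteness of $\bv_{\bpsi}$ and full column rank of $\bs_{\bpsi}$ under the regularity conditions.

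The hard part will be the linearization steps, since neither $\bPsi_j$ nor $Q_N$ is assumed differentiable: the expansions of $\bPsi_j(\cdot;\by_j,\widehat{\bgamma_j})$ near $\bbeta_0$ must be justified through stochastic-equicontinuity / local-uniform-approximation arguments on the population-smooth versions of the moment conditions, which is the content of condition \ref{cond-norm} and the same device already invoked in Theorem \ref{thm:norm} and in Theorem 7.2 of \cite{Newey-McFadden}. A second delicate point is the plugged-in nuisance estimator $\widehat{\bgamma_j}$: one must verify that its estimation error contributes no extra term to the influence function of $\widehat{\bbeta_j}$ beyond what the definitions of $\bv_{j,\bpsi_j}$ and $\bj_{\bpsi}$ already absorb, which is where orthogonality of the $\bbeta$- and $\bgamma_j$-composite scores (or the profiling structure) enters. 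Finally, one must confirm that replacing $\widehat{\bbeta_c}$ by $\widehat{\bbeta_i}$ in the sensitivity matrices and the pre-limit weight matrix perturbs $\widehat{\bA}_N$ and $\widehat{\bB}_{N,j}$ only by $o_p(1)$, which is immediate from condition \ref{cond-equiv}, Lemma \ref{lemma:weight-matrix}, and the continuous mapping theorem. Everything else is routine matrix algebra and an application of Slutsky's theorem.
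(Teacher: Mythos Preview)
Your approach is correct and mirrors exactly the heuristic derivation the paper gives in Section \ref{sec:implementation} to motivate the one-step estimator (the formal proof is deferred to the Supplemental Material, but that heuristic is the skeleton it formalizes). The paper linearizes each $\bPsi_j$ around $\widehat{\bbeta_j}$ to write $\bPsi_N(\widehat{\bbeta_c};\by)\approx -\mbox{diag}\{\bs_{j,\bpsi_j}(\widehat{\bbeta_j})\}\,(\widehat{\bbeta_c}-\widehat{\bbeta}_{MCLE})$ and then solves the first-order condition for $Q_N$; you instead linearize each $\widehat{\bbeta_j}$ around $\bbeta_0$ and push everything through the explicit weighted-average form $\widehat{\bbeta}_{DIMM}=\widehat{\bA}_N^{-1}\sum_j\widehat{\bB}_{N,j}\widehat{\bbeta_j}$ using $\sum_j\widehat{\bB}_{N,j}=\widehat{\bA}_N$. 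These are two equivalent bookkeeping choices for the same Taylor/stochastic-equicontinuity argument, and both land on the influence-function representation $\bj_{\bpsi}^{-1}\bs^T_{\bpsi}\bv^{-1}_{\bpsi}\sqrt{N}\bPsi_N(\bbeta_0;\by)+o_p(1)$. Your identification of the three delicate points (non-differentiability handled via \ref{cond-norm}/\ref{cond-equiv}, the plugged-in $\widehat{\bgamma_j}$, and the replacement of $\widehat{\bbeta_c}$ by $\widehat{\bbeta_i}$ in the sensitivities) is accurate and matches what the paper's conditions are designed to absorb.
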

The proof of this theorem is given in the Supplemental Material. The additional conditions specify the convergence rate of the MCLE's $\widehat{\bbeta_j}$ to ensure the proper convergence rate of $\widehat{\bbeta}_{DIMM}$. These are necessary because the computation of the one-step estimator relies solely on the MCLE's. This theorem is the key result that allows us to use the one-step estimator, which is more computationally attractive than $\widehat{\bbeta_c}$, without sacrificing any of the asymptotic properties enjoyed by $\widehat{\bbeta_c}$, such as estimation efficiency. \\
Finally, it is clear from Theorem \ref{thm:dist-equiv} and the form of the Godambe information $\bj_{\bpsi}(\bbeta)=\sum \limits_{i,j=1}^J \bs_{i, \bpsi_i}(\bbeta) \left[ \bv^{-1}_{\bpsi} (\bbeta) \right]_{i,j} \bs_{j, \bpsi_j}(\bbeta)$ that under conditions \ref{cond-weight-matrix}-\ref{cond-equiv}, a consistent estimator of the asymptotic covariance of $\widehat{\bbeta}_{DIMM}$ is
\begin{align*}
\left( N \sum \limits_{i,j=1}^J \bS_{i, \bpsi_i} (\widehat{\bbeta_i}; \by_i) \left[ \widehat{\bV}^{-1}_{N, \bpsi} \right]_{i,j} \bS_{j, \bpsi_j} (\widehat{\bbeta_j}; \by_j) \right)^{-1}.
\end{align*}
Equipped with $\widehat{\bbeta}_{DIMM}$ and an estimate of its asymptotic covariance, we can do Wald tests and construct confidence intervals for inference on $\bbeta$. When conditions \ref{cond-weight-matrix}-\ref{cond-equiv} hold, it is clear that $Q_N(\widehat{\bbeta}_{DIMM})\stackrel{d}{\rightarrow} \chi^2_{(J-1)p}$ as $N \rightarrow \infty$, allowing us to test the goodness-of-fit of our model. When $Jp$ grows very large ($\approx 5000$), inversion of $\widehat{\bV}_{N, \bpsi}$ can be numerically unstable, although we have never encountered such a situation. In this case, we propose using a regularized modified Cholesky decomposition of $\widehat{\bV}_{N, \bpsi}$ following \cite{Pourahmadi}. Computation of a regularized estimate of $\widehat{\bV}_{N, \bpsi}^{-1}$ requires the inversion of a diagonal matrix, which is fast to compute, and the selection of a tuning parameter by cross-validation. Details are available in the Supplemental Material, and our R package allows for the implementation of a regularized weight matrix. \\
In summary, DIMM proceeds in three steps:
\begin{enumerate}[label={Step \arabic*},leftmargin=8eX]
\item Split the data according to established scientific knowledge to form $J$ blocks of lower-dimensional response subvectors with homogeneous correlations.
\item Analyze the $J$ blocks in parallel using pairwise CL. MCLE's are obtained using the R function \verb|optim|. We run 500 iterations of Nelder-Mead with initial values $\bbeta=(1, \ldots, 1)^T$. End values of this optimization are used as starting values for the BFGS algorithm, which yields $\widehat{\bbeta_j}$. We return $\left\{ \widehat{\bbeta_j}, \bpsi_j( \widehat{\bbeta_j} ; \by_{i,j}, \widehat{\bgamma_j} ), \bS_{j, \bpsi_j} (\widehat{\bbeta_j}; \by_j) \right\}_{j=1}^J$. 
\item Compute $\widehat{\bV}_{N, \bpsi}$ and then find $\widehat{\bbeta}_{DIMM}$ in \eqref{one-step-estimator}.
\end{enumerate}
An R package to implement DIMM is provided in the Supplemental Material and will be submitted to the Comprehensive R Archive Network (CRAN) shortly.

\section{SIMULATIONS}
\label{sec:simulations}

We examine through simulations the performance and finite sample properties in Theorem \ref{thm:dist-equiv} of the one-step estimator $\widehat{\bbeta}_{DIMM}$ under the linear regression setting $\bmu_i=\bX_i \bbeta$, where $\bmu_i=E(\bY_i \lvert \bX_i, \bbeta)$, $\bY_i \sim \mathcal{N} (\bX_i \bbeta, \bSigma)$. We consider two sets of simulations: the first illustrates DIMM for different dimensions $M$ of $\bY$, $J=5$ for all settings, with an intercept included in $\bX_i$, and varying number of covariates; the second pushes DIMM to its extremes with very large $M$ and $J$, and five covariates. In both settings, to mimic the infant EEG data, we let $\bSigma=\bS \otimes \bA$ with nested correlation structure, where $\otimes$ denotes the Kronecker product, $\bA$ an AR(1) covariance matrix, and $\bS$ a $J \times J$ positive-definite matrix.\\
$\left\{ \bY_i, \bX_i \right\}_{i=1}^N$ can be partitioned into $J$ blocks of data with AR(1) covariance within each block. Data within each block is modelled using the bivariate normal marginal distribution. We note that $\widehat{\bbeta_j}$ has a closed-form solution following generalized least squares (GLS): estimating $\widehat{\bbeta_j}$ can be done by iteratively updating $\widehat{\bbeta_j}^{(k)}=( \bX_j^T \widehat{\bSigma}_j^{(k)} \bX_j )^{-1} \bX_j^T \allowbreak \left\{ \widehat{\bSigma}_j^{(k)}\right\}^{-1}  \by_j$ and $\widehat{\bSigma}_j^{(k)}$, where $\widehat{\bSigma}_j^{(k)}$ has a certain known covariance structure, for $k=1,2,\ldots$ until convergence. We choose to use \verb|optim| instead of GLS because it allows our functions to be more general and performs as well as GLS; for comparison, results using GLS can be found in the Supplemental Material. True value of $\bbeta$ is set to $\bbeta_0=(0.3, 0.6, 0.8, 1.2, 0.45, 1.6)^T$ in the case of five covariates, and subsets thereof for fewer covariates.\\
We discuss the first set of simulations. Let sample size be $N=1,000$ and the AR(1) covariance matrix $\bA$ have standard deviation $\sigma=2$ and correlation $\rho=0.5$. CL estimation of $\widehat{\bbeta_j}$ is done first by using the correct AR(1) block covariance structure (DIMM-AR(1)). To evaluate how our method performs under covariance misspecification, we estimate $\widehat{\bbeta_j}$ using a compound symmetry (DIMM-CS) block covariance structure. \\
We compute $\widehat{\bbeta}_{DIMM}$ from \eqref{one-step-estimator} and its covariance, and report root mean squared error (RMSE), empirical standard error (ESE), mean asymptotic standard error (ASE), and mean bias (BIAS) with $M=200$ and five scalar covariates (Table \ref{simulations-1}) and with $M=1,000$ and two vector covariates (Table \ref{simulations-2}). We compare DIMM to estimates of $\bbeta$ obtained using GEE with a compound symmetry covariance structure (GEE-CS) and independence covariance structure (GEE-IND) using the R library \verb|geepack| (\cite{geepack}), and using GLS with known covariance (GLS-oracle) (our code). The latter can be considered the ``oracle setting'', as we do not estimate the covariance of the response but use the true covariance to estimate $\bbeta$. We examine type-I error of the test $H_0:\beta_q=0$ for $q=1, \ldots, p$ for each simulation scenario, and the chi-squared distribution of test statistic $Q_N(\widehat{\bbeta}_{DIMM})$ with $M=200$ and one covariate (see Supplemental Material). Simulations are conducted using R software on a standard Linux cluster with 16GB of random-access memory per CPU. CL evaluation is coded in C++ but minimization of the CL occurs in R.\\
In Table \ref{simulations-1}, $\widehat{\bbeta}_{DIMM}$ appears consistent since BIAS is close to zero. RMSE, ESE and ASE are approximately equal, meaning DIMM is unbiased and has correct variance formula in Theorem \ref{thm:dist-equiv}. Moreover, DIMM mean variance is generally smaller than GEE mean variance. In data analyses, this results in increased statistical power and more signal detection. Finally, DIMM is close to attaining the estimation efficiency under the GLS-oracle case of known covariance, which is the best efficiency possible. In Table \ref{simulations-2}, we corroborate these observations for spatially/longitudinally-varying vector covariates. Our method also still performs well when dimension is equal to sample size. Finally from Figure \ref{simulations-3}, we see that DIMM is computationally much faster than GEE and GLS, and maintains appropriate confidence interval coverage, corroborating the theoretical asymptotic distribution in Theorem \ref{thm:dist-equiv} for large sample size.  For fixed $J$, DIMM is scalable, since the dimension of $\bPsi_N$ does not increase. We remark that CPU time consists of time spent by the CPU on calculations and is generally shorter than elapsed time. Elapsed time depends greatly on the implementation and hardware, and is therefore harder to compare between methods.
\begin{table}[H]
\centering
\ra{1.2}
\caption{Simulation results: RMSE, BIAS, ESE, ASE with five covariates, $N=1,000$, $M=200$, $J=5$, averaged over 500 simulations.}
\label{simulations-1}
\resizebox{\columnwidth}{!}{%
\begin{tabular}{@{}rrrrrrr@{}}
\toprule
\multicolumn{1}{c}{} & measure$\times 10^{-2}$ & DIMM-AR(1) & DIMM-CS & GEE-CS & GEE-IND &GLS-oracle \\ 
\midrule
$\beta_0$ & RMSE/BIAS & 4.323/$-0.315$ & 4.324/$-0.317$ & 4.881/$-0.334$ & 4.881/$-0.334$  & 4.117/$-0.357$ \\ 
& ESE/ASE & 4.316/4.213 & 4.317/4.214 & 4.874/4.848 & 4.874/4.848 & 4.106/4.123 \\ 
$\beta_1$ & RMSE/BIAS & 1.837/0.043 & 1.837/0.043 & 2.092/0.08 & 2.092/0.08 & 1.796/0.062 \\ 
& ESE/ASE & 1.838/1.779 & 1.838/1.78 & 2.092/2.052 & 2.092/2.052 & 1.797/1.739 \\ 
$\beta_2$ & RMSE/BIAS & 3.469/$-0.065$ & 3.471/$-0.066$ & 3.753/0.084 & 3.753/0.084 & 3.243/$-0.021$ \\ 
& ESE/ASE & 3.472/3.23 & 3.474/3.232 & 3.756/3.725 & 3.756/3.725 & 3.247/3.168  \\ 
$\beta_3$ & RMSE/BIAS & 1.509/0.138 & 1.511/0.139 & 1.671/0.095 & 1.671/0.095 & 1.453/0.132 \\ 
& ESE/ASE & 1.505/1.45 & 1.506/1.45 & 1.67/1.669 & 1.67/1.669 & 1.449/1.419 \\ 
$\beta_4$ & RMSE/BIAS & 5.486/0.196 & 5.485/0.199 & 5.982/0.193 & 5.982/0.193 & 5.256/0.288 \\ 
& ESE/ASE & 5.488/5.152 & 5.487/5.153 & 5.984/5.923 & 5.984/5.923 & 5.254/5.036 \\ 
$\beta_5$ & RMSE/BIAS & 3.56/$-0.072$ & 3.56/$-0.072$ & 3.988/$-0.081$ & 3.988/$-0.081$ & 3.424/$-0.044$ \\ 
& ESE/ASE & 3.563/3.208 & 3.563/3.209 & 3.991/3.741 & 3.991/3.741 & 3.427/3.175 \\ 
\bottomrule
\end{tabular}%
}
\parbox{\textwidth}{\small
\vspace{1eX} 
Response block sizes are $(m_1, m_2, m_3, m_4, m_5)=(45, 42, 50, 34, 29)$. $X_1 \sim Normal(0,1)$, $X_2\sim Bernoulli(0.3)$, $X_3 \sim Categorical(0.1, 0.2, 0.4, 0.25, 0.05)$, $X_4 \sim Uniform(0,1)$, and $X_5=X_1 \times X_2$ an interaction term.}
\end{table}
\begin{table}[h!]
\centering
\ra{1.2}
\caption{Simulation results: RMSE, BIAS, ESE, ASE with two covariates, $N=1,000$, $M=1,000$, $J=5$, averaged over 500 simulations.}
\label{simulations-2}
\resizebox{\columnwidth}{!}{%
\begin{tabular}{@{}rrrrrrr@{}}
\toprule
\multicolumn{1}{c}{} & measure$\times 10^{-2}$ & DIMM-AR(1) & DIMM-CS & GEE-CS & GEE-IND &GLS-oracle \\ 
\midrule
$\beta_0$ & RMSE/BIAS & 0.718/0.009 & 0.716/0.008 & 0.824/0.010 & 0.824/0.010 & 0.693/$-0.002$ \\
& ESE/ASE & 0.718/0.716 & 0.716/0.716 & 0.825/0.816 & 0.825/0.816 & 0.694/0.697 \\
$\beta_1$ & RMSE/BIAS & 0.196/$-0.002$ & 0.196/0.000 & 0.205/0.003 & 0.205/0.002 & 0.126/$-0.002$ \\
& ESE/ASE & 0.197/0.190 & 0.196/0.191 & 0.205/0.201 & 0.205/0.201 & 0.126/0.126 \\
$\beta_2$ & RMSE/BIAS & 0.457/0.012 & 0.456/0.014 & 0.516/0.001 & 0.516/0.001 & 0.438/0.023 \\
& ESE/ASE & 0.457/0.458 & 0.457/0.458 & 0.517/0.518 & 0.517/0.518 & 0.437/0.445 \\
\bottomrule
\end{tabular}%
}
\parbox{\textwidth}{\small
\vspace{1eX} Response block sizes are $(m_1, m_2, m_3, m_4, m_5)=(225, 209, 247, 170, 149)$. $X_1 \sim Normal_M(0,S)$, where $S$ is a positive-definite $M \times M$ matrix, $X_2$ a vector of alternating $0$'s and $1$'s to imitate an exposure.}
\end{table}
We now discuss the second set of simulations. We let sample size $N=1,500$ and consider a very challenging linear regression problem with high-dimension $M=10,000$, and $J=12$ such that $(m_1, \ldots, m_{12})=(917, 863, 988, 734, 906, 603, 756, 963, \allowbreak 915, 856, 641, 858)$. We let $\bX_{i}$ be a matrix of five covariates and an intercept, and the AR(1) covariance matrix $\bA$ with standard deviation $\sigma=16$ and correlation $\rho=0.8$. We compute $\widehat{\bbeta}_{DIMM}$ from \eqref{one-step-estimator} and its estimated covariance, and plot RMSE, ESE, ASE, and BIAS in Figure \ref{simulations-4}. We were unable to compare DIMM with existing competitors due to the tremendous computational burden associated with such high-dimensional $M$. As in the first set of simulations, $\widehat{\bbeta}_{DIMM}$ is consistent with ignorable BIAS. RMSE, ESE and ASE are approximately equal, confirming the large-sample properties of DIMM in this numerical example. We remark that ASE slightly underestimates ESE for certain covariate types. This could be due to the large dimensionality $Jp=72$ of $\bPsi_N$ and/or the poorer performance of GMM in smaller samples (see Section \ref{sec:discussion}).
\begin{figure}[H]
\begin{center}
\includegraphics[width=\linewidth]{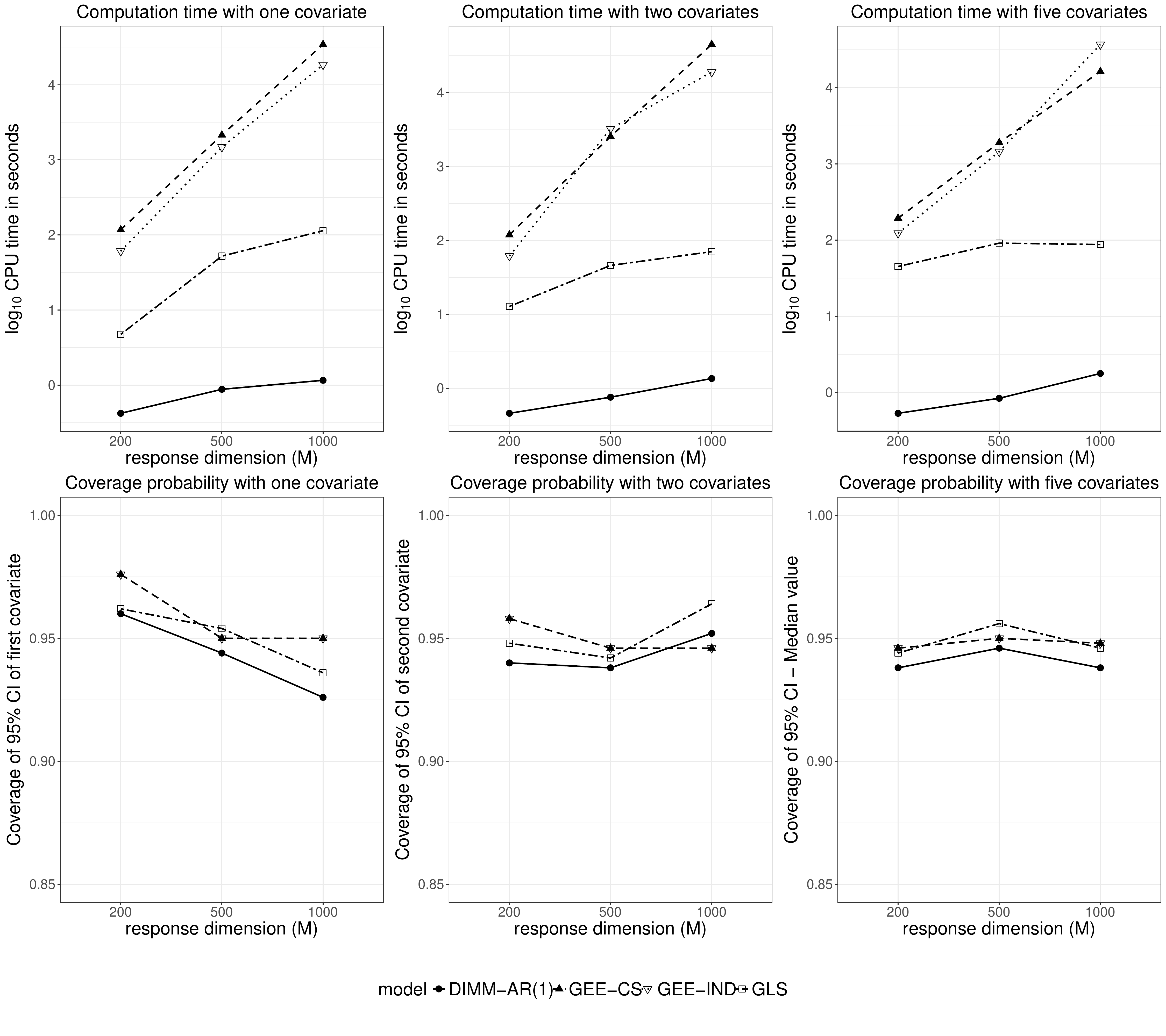}
\end{center}
\vspace{-0.5cm}\parbox{\textwidth}{\caption{\small Upper panels: comparison of computation time on $\log_{10}$ scale of four methods for varying dimension $M$ based on 500 simulations. Lower panels: comparison of 95\% confidence interval coverage of four methods for varying dimension $M$ based on 500 simulations. Left column has $X_1 \sim \mathcal{N}(0,1)$; middle column has $X_1 \sim \mathcal{N}_M(0,S)$, where $S$ is a positive-definite $M \times M$ matrix, and $X_2$ a vector of alternating 0's and 1's; right column has $X_1 \sim \mathcal{N}(0,1)$, $X_2 \sim Bernoulli(0.3)$, $X_3 \sim Multinomial(0.1, 0.2, 0.4, 0.25, 0.05)$, $X_4 \sim Uniform(0,1)$, and $X_5$ an interaction between $X_1$ and $X_2$. \label{simulations-3}}}
\end{figure}
\begin{figure}[H]
\begin{center}
\includegraphics[width=0.9\linewidth]{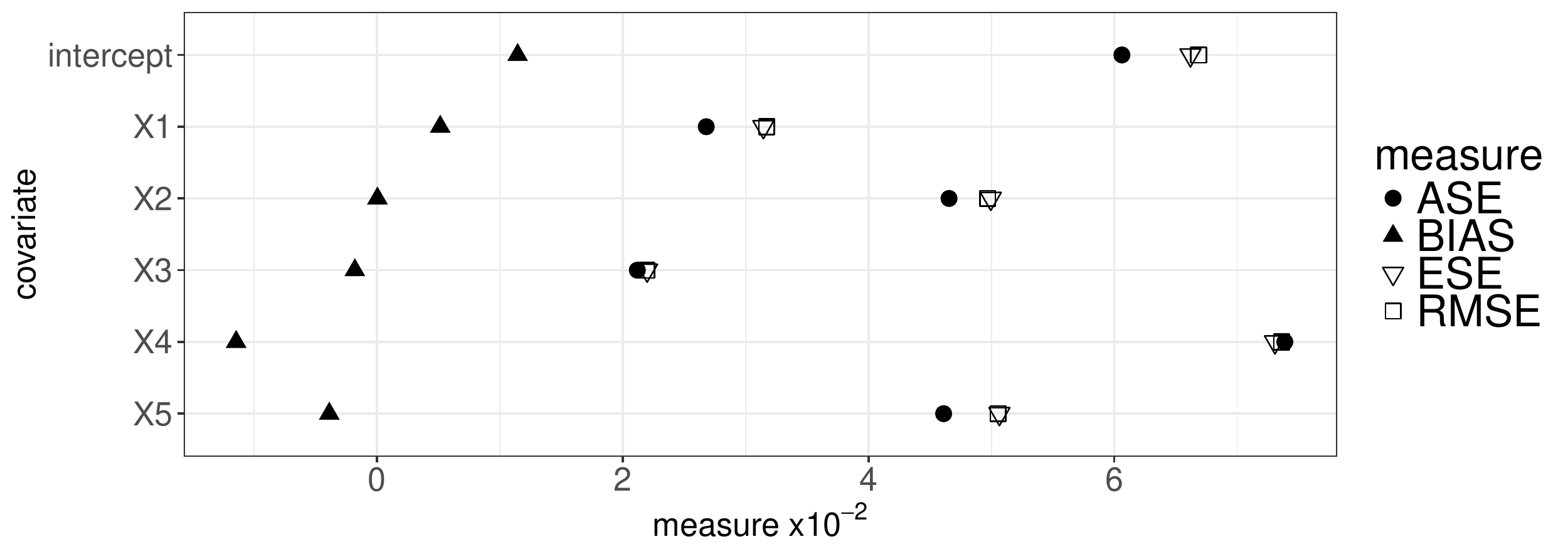}
\end{center}
\vspace{-0.5cm}\parbox{\textwidth}{\caption{\small RMSE, BIAS, ESE, ASE based on 100 simulations with an intercept and five covariates, and $M=10,000$. Covariates are simulated as in the right column of Figure \ref{simulations-3}. \label{simulations-4}}}
\end{figure}
\vspace{-0.5cm}
Beyond theoretical validation, the simulation results presented in this section highlight the applicability, flexibility and computational power of DIMM. The empirical evidence from simulations is encouraging and advocates the ability of DIMM to deal with high-dimensional correlated response data with multi-level nested correlations.

\section{APPLICATION TO INFANT EEG DATA}
\label{sec:data}

\begin{figure}[h!]
\begin{center}
\centerline{\includegraphics[width=0.81\linewidth]{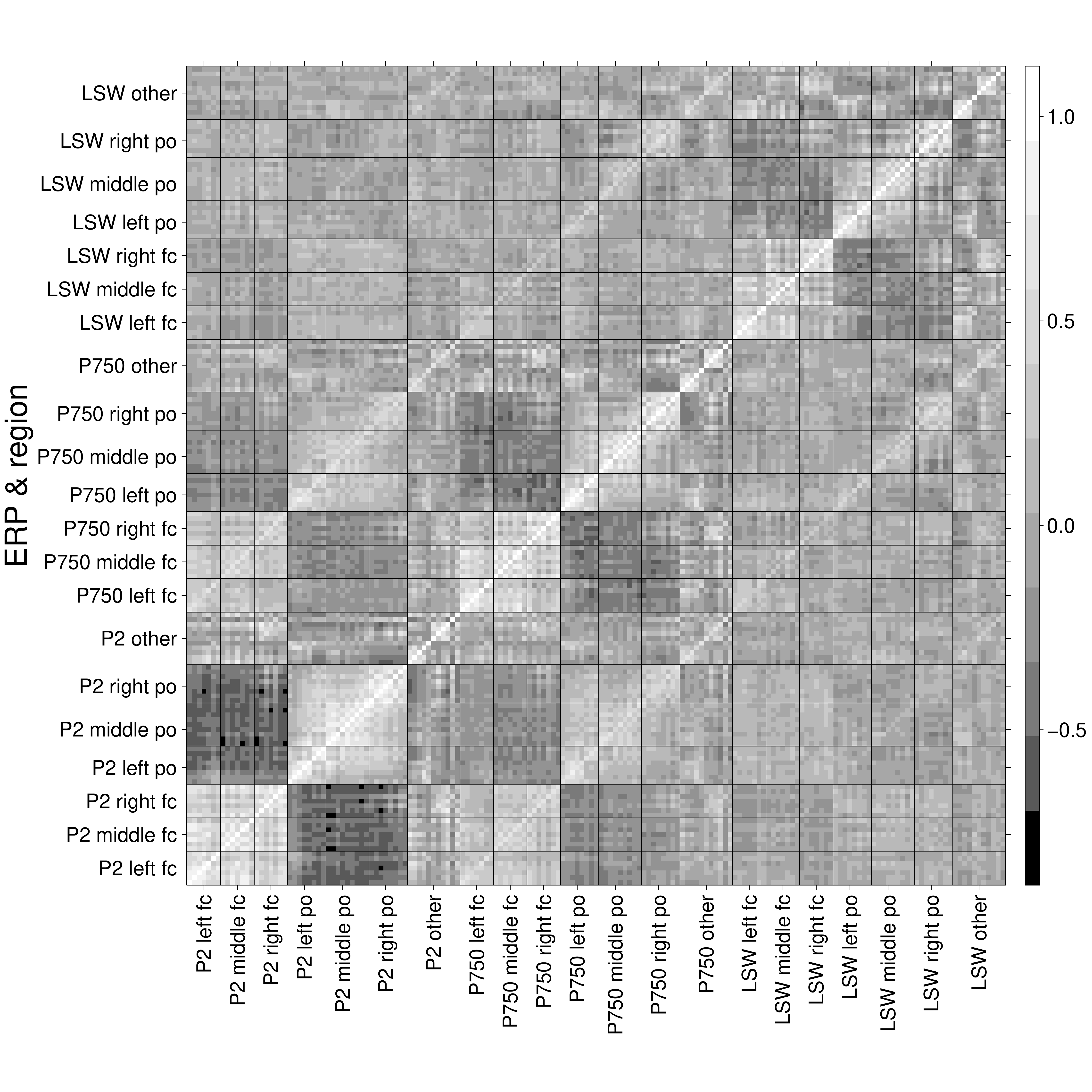}}
\end{center}
\vspace{-1.7cm}\caption{\small Correlation of electrical amplitude at three ERP's for iron sufficient children under stimulus of mother's voice (color plot and additional plots in Supplemental Material).}\label{heatmap-correlation}
\end{figure}
We present the analysis of the infant EEG data introduced in Section \ref{sec:intro}. EEG data from 157 two-month-old infants under two stimuli at 46 nodes was used. Six brain regions were identified by the investigator as related to auditory recognition memory, with an additional reference node (VREF), as visualized in Figure \ref{map-nodes-regions}: left frontal-central (11, 12, 13, 14, 15, 18, 19), middle frontal-central (3, 4, 6, 7, 8, 9, 54), right frontal-central (2, 53, 56, 57, 58, 59, 60), left parietal-occipital (24, 25, 26, 27, 28, 29, 30, 32), middle parietal-occipital (31, 33, 34, 35, 36, 37, 38, 39, 40),  and right parietal-occipital (42, 43, 44, 45 46, 47, 48, 52). \\
The primary scientific objective of this study is to quantify the effect of iron deficiency on auditory recognition memory. From cord blood at birth, 50 infants were classified as iron deficient ($sufficiency\_ \allowbreak status$ $= 1$) and 107 as iron sufficient based on serum ferritin and zinc protoporphyrin levels. Additional available covariates are age and type of stimulus (mother's voice coded with $voice\_stimulus=1$). The response for one infant has a complex nested correlation structure; see Figure \ref{heatmap-correlation}. This figure aligns with substantive scientific knowledge and suggests a partition of data into $18$ blocks of response subvectors, one for each ERP and brain region. It also corroborates prior knowledge of high correlations within frontal-central regions, parietal-occipital regions, and between ERPs P2 and P750.\\
Let $\bY_{i,j}$ be the vector of EEG measurements in one brain region and ERP (block $j$, $j=1, \ldots, 18$) for infant $i$, and consider the linear model  with block-specific coefficients:
\begin{align}
E\left(\bY_{i,j}\right)&=\beta_{0,j}+\beta_{1,j}age_{i,j} +\beta_{2,j} voice\_stimulus_{i,j}+\beta_{3,j} sufficiency\_status_{i,j}.  \label{homog}
\end{align}
Instead of assuming global homogeneous covariate effects, which is not biologically meaningful, we perform analyses based on certain locally homogeneous covariate-response relationships to identify specific regions affected or not by iron deficiency. Through individual block analyses (see Supplemental Material) and existing knowledge, we identify homogeneous covariate effects across frontal-central regions in each ERP, the left parietal-occipital region in P2 and P750, the middle and right parietal-occipital regions from P2, the middle and right parietal-occipital regions from P750, and parietal-occipital regions from LSW. As mentioned previously, DIMM's flexibility allows us to conduct sub-group analyses by combining blocks of homogeneous effects to improve statistical power.\\
We use an inverse normal transformation of the responses for each analysis. To estimate regression parameters using DIMM, we assume a compound symmetric covariance structure of the response within each brain region and each ERP; block analyses are run in parallel; we compute the one-step estimator $\widehat{\bbeta}_{DIMM}$ for the set of homogeneous regions of interest. We compare DIMM to GEE-CS to reinforce gains in computation time and statistical power. Based on simulations mimicking our data setting (see Supplemental Material), we find that DIMM and GEE-CS have adequate power despite limited sample size. We present iron sufficiency status effect estimates for selected sub-group analyses in Table \ref{data-results} (complete results available in the Supplemental Material).\\
\begin{table}[h!]
\centering
\ra{1.2}
\caption{Select EEG data analysis results: iron sufficiency status effect estimates and statistics for each combination scheme.}
\begin{tabular}{@{}lrrrrr@{}}
\toprule
\multirow{2}{*}{combine region, ERP} & \multirow{2}{*}{method} & estimate & \multirow{2}{*}{p-value} & CPU & CPU\\
& & (s.d.$\times 10^{-2}$) & & seconds & time ratio \\
\midrule
left, middle and right fc, P2 & GEE-CS & 0.103 (12.0) & 0.391 & 0.701 & \multirow{2}{*}{0.47}\\
& DIMM & 0.087 (11.9) & 0.468 & 1.497 & \\
left po, P2 \& P750 & GEE-CS & -0.174 (8.3) & 0.036 & 0.209 & \multirow{2}{*}{1.20}\\
& DIMM & -0.226 (8.1) & 0.005 & 0.174 & \\
left, middle and right po, LSW & GEE-CS & 0.041 (8.7) & 0.640 & 0.528 & \multirow{2}{*}{2.90}\\
& DIMM & 0.087 (8.4) & 0.298 & 0.182 & \\
\bottomrule
\end{tabular}
\parbox{\textwidth}{\small
\vspace{1eX} fc, frontal-central; po, parietal-occipital; s.d., standard deviation.}
\label{data-results}
\end{table}
For all analyses, DIMM finds a more precise estimate than GEE. This is because the compound symmetry covariance structure assumed by GEE over the entire response may not be close to the true covariance, resulting in a loss of efficiency. For all but one analysis -- left, middle and right frontal-central, P2 -- DIMM also performs faster than GEE. This is because of the parallelization of DIMM. DIMM may be slower than GEE in the one analysis because of the limited sample size and small response dimensionality, limiting the improvements of DIMM over GEE. Nonetheless, in data simulations (see Supplemental Material), on average DIMM performs faster than GEE for this analysis. Effect estimates from GEE and DIMM tend to be in the same direction, increasing confidence in our results. We find a significant result for the left parietal-occipital region in P2 \& P750. Iron deficient infants had expected transformed left parietal-occipital P2 \& P750 amplitude 0.226 units lower than iron sufficient infants of the same age and sex. By making better model assumptions and running block analyses in parallels, we find more precise estimates of iron sufficiency status effect faster than using GEE's. The proposed DIMM shows promise in simple data analyses, and has the theoretical justification to perform well in more complex scenarios.

\section{DISCUSSION}
\label{sec:discussion}

The proposed DIMM allows for the fast and efficient estimation of regression parameters with high-dimensional correlated response. Simulations show the scalability of DIMM for fixed $J$ and confirm key asymptotic properties of the DIMM estimator.\\
The $\widehat{\bbeta}_{DIMM}$ estimator can be implemented using a fully parallelized computational scheme, for example using the MapReduce paradigm on the Hadoop platform. Investigators split data into blocks of responses with simple and homogeneous covariance structures. The data partition may be driven by some established scientific knowledge or certain data-driven approaches. Errors in prior knowledge can lead to misspecification of the data split, which may be checked via model diagnostics or goodness-of-fit test statistics. If sample size is large enough, investigators may consider imposing no or very little structure on $\bgamma_j$ to avoid misspecifying response blocks.\\
Potential trade-offs between number of blocks $J$ and block size $m_j$ should be evaluated when there is no strong substantive knowledge to guide the choice of partition. Our numerical experience has suggested that although large $J$ leads to smaller $m_j$ and therefore faster computation and less strict model assumptions, DIMM may yield inefficient results due to large dimensionality of the integrated CL score vector $\bPsi_N$. On the other hand, large $m_j$ but small $J$ will have the opposite effect of slower computation and stricter model assumptions within each block but better combination of results.\\
Finally, issues related to poor performance of GMM in small samples have been documented in the literature and must be considered when sample size is small; for a discussion, see \cite{Hansen-Heaton-Yaron} and others in the special section on small-sample properties of GMM in the Journal of Business and Economic Statistics. In this case, to reduce the dimensionality of the integrated CL score vector $\bPsi_N$, we suggest integrating analyses from a small number of blocks for more reliable results, as done in Section \ref{sec:data}.\\
DIMM utilizes the full strength of GMM to combine information from multiple sources to achieve greater statistical power, an approach that has been shown to work well with longitudinal data; see for examples \cite{Wang-Wang-Song-2012} and \cite{Wang-Wang-Song-2016}. DIMM has the potential to combine multimodal data, an important analytic task in biomedical data analysis for personalized medicine. Indeed, response data in each block can be modelled using any pairwise distribution $f_j$, where $\left\{ f_j\right\}_{j=1}^J$ can be made compatible with $f(\bY; \bGamma)$ using Fr\'{e}chet classes (see Chapter 3 of \cite{Joe-1}). We anticipate numerous extensions to DIMM, including the addition of penalty terms to CL estimating equations, and allowing for spatially varying mean parameter $\bbeta$ and prediction of neighbouring response variables. We anticipate that DIMM will be useful for many types of data, including genomic, epigenomic, and metabolomic, indicating the promising methodological potential of DIMM.\\

\appendix

\section{APPENDIX: proofs of asymptotic properties}

\label{sec:appendix:1}

Let $\Theta$ be the compact parametric space of $\bbeta$ and let $\left\| \cdot \right\|$ be the Euclidean norm. We list the regularity conditions required to establish large samples properties in the paper.
\begin{enumerate}[label=C.\arabic*]
\item \label{cond-weight-matrix} Assume $E_{\bbeta_0} \bPsi_N (\bbeta; \by)$ has a unique zero at $\bbeta_0$, $-\nabla_{\bbeta} E_{\bbeta} \bpsi_N(\bbeta; \by_i)$ is smooth in a neighbourhood $\mathcal{N}(\bbeta_0)$ of $\bbeta_0$ and positive definite, $\bv_{\bpsi}(\beta_0)$ is finite, positive-definite and nonsingular, and $\left\| \bpsi_N(\bbeta_1; \by_i) - \bpsi_N(\bbeta_2; \by_i) \right\| \leq C \left\| \bbeta_1 - \bbeta_2 \right\|$ for all $\bbeta_1, \bbeta_2 \in \mathcal{N}(\bbeta_0)$ and some constant $C>0$.
\item \label{cond-consist}
Following \cite{Newey-McFadden}, define $Q_0(\bbeta)=E_{\bbeta} \left\{ \bPsi_N^T(\bbeta; \bY) \right) \bv_{\bpsi}^{-1}(\bbeta_0) \allowbreak E_{\bbeta} \left( \bPsi_N(\bbeta; \bY) \right\}$. Assume $Q_0(\bbeta)$ is twice-continuously differentiable in $\mathcal{N}(\bbeta_0)$.
\item \label{cond-norm} Let $\widehat{\bbeta_c}$ be as defined in \eqref{def:combined-estimator}, and $\bbeta_0$ an interior point of its parameter space $\Theta$. Following \cite{Newey-McFadden}, assume $Q_N(\widehat{\bbeta_c}) \leq \inf \limits_{\bbeta \in \Theta} Q_N(\bbeta) + o_p(1)$, and, for any $\delta_N \rightarrow 0$,
\begin{equation*}
\sup \limits_{\left\| \bbeta - \bbeta_0 \right\| \leq \delta_N} \frac{\sqrt{N}}{1+\sqrt{N} \left\| \bbeta - \bbeta_0 \right\| } \left\| \bPsi_N(\bbeta; \by) - \bPsi_N(\bbeta_0; \by) - E_{\bbeta} \bPsi_N(\bbeta; \bY) \right\| \stackrel{p}{\rightarrow} 0.
\end{equation*}
\item \label{cond-equiv}
For each $j=1, \ldots, J$, assume $\widehat{\bbeta_j}=\bbeta_0 + O_p(N^{-1/2})$. For any $\delta_N \rightarrow 0$, assume
\begin{equation*}
\sup \limits_{\left\| \bbeta - \bbeta_0 \right\| \leq \delta_N} \frac{\sqrt{N}}{1+\sqrt{N} \left\| \bbeta - \bbeta_0 \right\| } \left\| \bPsi_N(\bbeta; \by) - \bPsi_N(\bbeta_0; \by) - E_{\bbeta} \bPsi_N(\bbeta; \bY) \right\| =O_p(N^{-1/2}).\\
\end{equation*}
\end{enumerate}

\begin{proof}[Proof of Lemma \ref{lemma:weight-matrix}:]
For notation purposes, let $\widehat{\bbeta}_{MCLE}=\left(\widehat{\bbeta_1}, \ldots, \widehat{\bbeta_J} \right)^T$ and \\$\bpsi_N(\widehat{\bbeta}_{MCLE}; \by_i)=\left( \bpsi_1(\widehat{\bbeta_1}; \by_{i,1}, \widehat{\bgamma_1})^T, \ldots, \bpsi_J(\widehat{\bbeta_J}; \by_{i,J}, \widehat{\bgamma_J})^T \right)^T$. By consistency of the MCLE due to \ref{cond-weight-matrix}, $\widehat{\bbeta_j}-\bbeta_0=o_p(1)$. Since $J$ and $p$ finite, $\left\| \widehat{\bbeta}_{MCLE} - \bbeta_0 \right\| = o_p(1)$. Then by \ref{cond-weight-matrix},
\begin{align*}
\left\| \bpsi_N(\widehat{\bbeta}_{MCLE}; \by_i) - \bpsi_N(\bbeta_0; \by_i) \right\| &\leq C \left\| \widehat{\bbeta}_{MCLE} - \bbeta_0 \right\| =o_p(1).
\end{align*}
Plugging into $\widehat{\bV}_{N, \bpsi}$, we have
\begin{align*}
\widehat{\bV}_{N, \bpsi}&=\frac{1}{N} \sum \limits_{i=1}^N \bpsi_N(\widehat{\bbeta}_{MCLE}; \by_i) \bpsi^T_N(\widehat{\bbeta}_{MCLE}; \by_i)\\
&=\frac{1}{N} \sum \limits_{i=1}^N \bpsi_N(\bbeta_0; \by_i) \bpsi^T_N(\bbeta_0; \by_i) + o_p(1) \frac{1}{N} \sum \limits_{i=1}^N \bpsi_N(\bbeta_0; \by_i) + o_p(1)\\
&=\frac{1}{N} \sum \limits_{i=1}^N \bpsi_N(\bbeta_0; \by_i) \bpsi^T_N(\bbeta_0; \by_i) + o_p(1).
\end{align*}
Note that $\frac{1}{N} \sum \limits_{i=1}^N \bpsi_N(\bbeta_0; \by_i) \bpsi^T_N(\bbeta_0; \by_i) = \bv_{\bpsi}(\bbeta_0)+o_p(1)$. Then, $\widehat{\bV}_{N, \bpsi}=\bv_{\bpsi}(\bbeta_0)+o_p(1)$.
\end{proof}

\begin{proof}[Proof of Theorem \ref{thm:consist}:]
It is sufficient to show that, by conditions \ref{cond-weight-matrix} and \ref{cond-consist}, $\frac{1}{N} Q_N(\bbeta)$ converges uniformly in probability to $Q_0(\bbeta)$.
\begin{align*}
&\left\| \frac{1}{N} Q_N(\bbeta) - Q_0(\bbeta) \right\|\\
&~~~~=\left\| \bPsi_N^T(\bbeta; \by) \widehat{\bV}^{-1}_{N, \bpsi} \bPsi_N(\bbeta; \by) - E_{\bbeta} \left( \bPsi^T_N(\bbeta; \bY) \right) \bv_{\bpsi}^{-1}(\bbeta_0) E_{\bbeta} \left( \bPsi_N(\bbeta; \bY) \right) \right\|\\
&~~~~=\left\| \bPsi_N^T(\bbeta; \by) \widehat{\bV}^{-1}_{N, \bpsi} \bPsi_N(\bbeta; \by) \right.\\
&~~~~~~~~-2E_{\bbeta} \left( \bPsi_N^T(\bbeta; \bY) \right) \widehat{\bV}^{-1}_{N, \bpsi} \bPsi_N(\bbeta; \by) + 2E_{\bbeta} \left( \bPsi_N^T(\bbeta; \bY) \right) \widehat{\bV}^{-1}_{N, \bpsi} \bPsi_N(\bbeta; \by)\\
&~~~~~~~~-2E_{\bbeta}\left( \bPsi_N^T(\bbeta; \bY) \right) \widehat{\bV}^{-1}_{N, \bpsi} E_{\bbeta} \left( \bPsi_N(\bbeta; \bY) \right) + 2E_{\bbeta} \left( \bPsi_N^T(\bbeta; \bY) \right) \widehat{\bV}^{-1}_{N, \bpsi} E_{\bbeta} \left( \bPsi_N(\bbeta; \bY) \right)\\
&~~~~~~~~\left. -E_{\bbeta} \left( \bPsi_N^T(\bbeta; \bY) \right) \bv_{\bpsi}^{-1}(\bbeta_0) E_{\bbeta} \left( \bPsi_N(\bbeta; \bY) \right) \right\|\\
&~~~~=\left\| \bPsi_N^T(\bbeta; \by) \widehat{\bV}^{-1}_{N, \bpsi} \bPsi_N(\bbeta; \by) - \bPsi_N^T(\bbeta; \by) \widehat{\bV}^{-1}_{N, \bpsi} E_{\bbeta} \left( \bPsi_N(\bbeta; \bY) \right) \right.\\
&~~~~~~~~-E_{\bbeta} \left( \bPsi_N^T(\bbeta; \bY) \right) \widehat{\bV}^{-1}_{N, \bpsi} \bPsi_N(\bbeta; \by) + E_{\bbeta} \left( \bPsi_N^T(\bbeta; \bY) \right) \widehat{\bV}^{-1}_{N, \bpsi} E_{\bbeta} \left( \bPsi_N(\bbeta; \bY) \right)\\
&~~~~~~~~+2E_{\bbeta} \left( \bPsi_N^T(\bbeta; \bY) \right) \widehat{\bV}^{-1}_{N, \bpsi} \bPsi_N(\bbeta; \by) -2E_{\bbeta} \left( \bPsi_N^T(\bbeta; \bY) \right) \widehat{\bV}^{-1}_{N, \bpsi} E_{\bbeta} \left( \bPsi_N(\bbeta; \bY) \right)\\
&~~~~~~~~\left. +E_{\bbeta} \left( \bPsi_N^T(\bbeta; \bY) \right) \widehat{\bV}^{-1}_{N, \bpsi} E_{\bbeta} \left( \bPsi_N(\bbeta; \bY) \right) - E_{\bbeta} \left( \bPsi_N^T(\bbeta; \bY) \right) \bv_{\bpsi}^{-1}(\bbeta_0) E_{\bbeta} \left( \bPsi_N(\bbeta; \bY) \right) \right\|\\
&~~~~\leq \left\| \left[ \bPsi_N(\bbeta; \by) - E_{\bbeta} \bPsi_N(\bbeta; \bY) \right]^T \widehat{\bV}^{-1}_{N, \bpsi} \left[ \bPsi_N(\bbeta; \by) - E_{\bbeta} \bPsi_N(\bbeta; \bY) \right] \right\|\\
&~~~~~~~~+2 \left\| E_{\bbeta} \left( \bPsi^T_N(\bbeta; \bY) \right) \widehat{\bV}^{-1}_{N, \bpsi} \left[ \bPsi_N(\bbeta; \by) - E_{\bbeta} \bPsi_N(\bbeta; \bY) \right] \right\|\\
&~~~~~~~~+\left\| E_{\bbeta} \left( \bPsi_N^T(\bbeta; \bY) \right) \left[ \widehat{\bV}^{-1}_{N, \bpsi} - \bv_{\bpsi}^{-1}(\bbeta_0) \right] E_{\bbeta} \left( \bPsi_N(\bbeta; \bY) \right) \right\|\\
&~~~~\leq \left\| \bPsi_N(\bbeta; \by) - E_{\bbeta} \bPsi_N(\bbeta; \bY) \right\|^2 \left\| \widehat{\bV}^{-1}_{N, \bpsi} \right\|\\
&~~~~~~~~+ 2\left\| E_{\bbeta} \bPsi_N(\bbeta; \bY) \right\| \left\| \bPsi_N(\bbeta;\by) - E_{\bbeta} \bPsi_N(\bbeta; \bY) \right\| \left\| \widehat{\bV}^{-1}_{N, \bpsi} \right\| \\
&~~~~~~~~+\left\| E_{\bbeta} \bPsi_N(\bbeta; \bY) \right\|^2 \left\| \widehat{\bV}^{-1}_{N, \bpsi} - \bv_{\bpsi}^{-1}(\bbeta_0) \right\|\\
&~~~~=O_p(N^{-1/2}) + o_p(1).
\end{align*}
It follows that $\sup \limits_{\bbeta \in \Theta} \left\| \frac{1}{N} Q_N(\bbeta) - Q_0(\bbeta) \right\| \stackrel{p}{\rightarrow} 0$ as $N \rightarrow \infty$. By Theorem 2.1 in \cite{Newey-McFadden}, the combined GMM estimator satisfies $\widehat{\bbeta_c} \stackrel{p}{\rightarrow} \bbeta_0$ as $N \rightarrow \infty$. \end{proof}

\bibliographystyle{apalike}

\bibliography{DIMM-bib-12182017}

\begin{thebibliography}{}

\bibitem[Arbia, 2014]{Arbia}
Arbia, G. (2014).
\newblock Pairwise likelihood inference for spatial regressions estimated on
  very large datasets.
\newblock {\em Spatial Statistics}, 7:21--39.

\bibitem[Bai et~al., 2012]{Bai-Song-Raghunathan}
Bai, Y., Song, P. X.~K., and Raghunathan, T.~E. (2012).
\newblock Joint composite estimating functions in spatiotemporal models: joint
  composite estimating functions.
\newblock {\em Journal of the Royal Statistical Society, Series B},
  74(5):799--824.

\bibitem[Banerjee et~al., 2008]{Banerjee-Gelfand-Finley-Sang}
Banerjee, S., Gelfand, A.~E., Finley, A.~O., and Sang, H. (2008).
\newblock Gaussian predictive process models for large spatial data sets.
\newblock {\em Journal of the Royal Statistical Society, Series B},
  70(4):825--848.

\bibitem[Bevilacqua et~al., 2012]{Bevilacqua-Gaetan-Mateu-Porcu}
Bevilacqua, M., Gaetan, C., Mateu, J., and Porcu, E. (2012).
\newblock Estimating space and space-time covariance functions for large data
  sets: a weighted composite likelihood approach.
\newblock {\em Journal of the American Statistical Association},
  107(497):268--280.

\bibitem[Cressie and Johannesson, 2008]{Cressie-Johannesson}
Cressie, N. and Johannesson, G. (2008).
\newblock Fixed rank kriging for very large spatial data sets.
\newblock {\em Journal of the Royal Statistical Society, Series B},
  70(1):209--226.

\bibitem[Crowder, 1995]{Crowder}
Crowder, M. (1995).
\newblock On the use of a working correlation matrix in using generalised
  linear models for repeated measures.
\newblock {\em Biometrika}, 82(2):407--410.

\bibitem[Efron, 1993]{Efron}
Efron, B. (1993).
\newblock Bayes and likelihood calculations from confidence intervals.
\newblock {\em Biometrika}, 80(1):3--26.

\bibitem[Fisher, 1930]{Fisher-1}
Fisher, R.~A. (1930).
\newblock Inverse probability.
\newblock In {\em Mathematical Proceedings of the Cambridge Philosophical
  Society}, volume~26, pages 528--535. Cambridge University Press.

\bibitem[Fisher, 1956]{Fisher-2}
Fisher, R.~A. (1956).
\newblock {\em Statistical methods and scientific inference}.
\newblock Hafner Publishing Co.

\bibitem[Fitzmaurice et~al., 1993]{Fitzmaurice-Laird-Rotnitzky}
Fitzmaurice, G.~M., Laird, N.~M., and Rotnitzky, A.~G. (1993).
\newblock Regression models for discrete longitudinal responses.
\newblock {\em Statistical Science}, 8(3):284--309.

\bibitem[Hansen, 1982]{Hansen}
Hansen, L.~P. (1982).
\newblock Large sample properties of generalized method of moments estimators.
\newblock {\em Econometrica}, 50(4):1029--1054.

\bibitem[Hansen et~al., 1996]{Hansen-Heaton-Yaron}
Hansen, L.~P., Heaton, J., and Yaron, A. (1996).
\newblock Finite-sample properties of some alternative gmm estimators.
\newblock {\em Journal of Business and Economic Statistics}, 14(3):262--280.

\bibitem[Heagerty and and, 1998]{Heagerty-Lele}
Heagerty, P.~J. and and, S. R.~L. (1998).
\newblock A composite likelihood approach to binary spatial data.
\newblock {\em Journal of the American Statistical Association},
  93(443):1099--1111.

\bibitem[H{\o}jsgaard et~al., 2006]{geepack}
H{\o}jsgaard, S., Halekoh, U., and Yan, J. (2006).
\newblock The r package geepack for generalized estimating equations.
\newblock {\em Journal of Statistical Software}, 15(2):1--11.

\bibitem[Jin, 2011]{Jin}
Jin, Z. (2011).
\newblock {\em Aspects of composite likelihood inference}.
\newblock Unpublished {PhD} thesis, University of Toronto.

\bibitem[Joe, 1997]{Joe-1}
Joe, H. (1997).
\newblock {\em Multivariate models and dependence concepts}.
\newblock Chapman \& Hall, 1 edition.

\bibitem[Joe, 2014]{Joe-2}
Joe, H. (2014).
\newblock {\em Dependence modeling with copulas}.
\newblock Chapman \& Hall, 1 edition.

\bibitem[Kong et~al., 2015]{Kong-Wang-Gray}
Kong, X., Wang, M.-C., and Gray, R. (2015).
\newblock Analysis of longitudinal multivariate outcome data from couples
  cohort studies: application to hpv transmission dynamics.
\newblock {\em Journal of the American Statistical Association},
  110(510):472--485.

\bibitem[Kuk and Nott, 2000]{Kuk-Nott}
Kuk, A.~Y. and Nott, D.~J. (2000).
\newblock A pairwise likelihood approach to analyzing correlated binary data.
\newblock {\em Statistics and Probability Letters}, 47(4):329--335.

\bibitem[Larribe and Fearnhead, 2011]{Larribe-Fearnhead}
Larribe, F. and Fearnhead, P. (2011).
\newblock On composite likelihoods in statistical genetics.
\newblock {\em Statistica Sinica}, 21(1):43--69.

\bibitem[Liang and Zeger, 1986]{Liang-Zeger}
Liang, K.-Y. and Zeger, S.~L. (1986).
\newblock Longitudinal data analysis using generalized linear models.
\newblock {\em Biometrika}, 73(1):13--22.

\bibitem[Lin and Xi, 2011]{Lin-Xi}
Lin, N. and Xi, R. (2011).
\newblock Aggregated estimating equation estimation.
\newblock {\em Statistics and its Interface}, 4(1):73--83.

\bibitem[Lindsay, 1988]{Lindsay}
Lindsay, B.~G. (1988).
\newblock Composite likelihood methods.
\newblock {\em Contemporary Mathematics}, 80:220--239.

\bibitem[Liu et~al., 2015]{Liu-Liu-Xie}
Liu, D., Liu, R.~Y., and Xie, M. (2015).
\newblock Multivariate meta-analysis of heterogeneous studies using only
  summary statistics: efficiency and robustness.
\newblock {\em Journal of the American Statistical Association},
  110(509):326--340.

\bibitem[Mackey et~al., 2011]{Mackey-Talwalkar-Jordan}
Mackey, L., Talwalkar, A., and Jordan, M.~I. (2011).
\newblock Divide-and-conquer matrix factorization.
\newblock In {\em Advances in neural information processing systems 24}, pages
  1134--1142.

\bibitem[Newey and McFadden, 1994]{Newey-McFadden}
Newey, W.~K. and McFadden, D. (1994).
\newblock Large sample estimation and hypothesis testing.
\newblock {\em Handbook of Econometrics}, 4:2111--2245.

\bibitem[Pourahmadi, 1999]{Pourahmadi}
Pourahmadi, M. (1999).
\newblock Joint mean-covariance models with applications to longitudinal data:
  unconstrained parametrisation.
\newblock {\em Biometrika}, 86(3):677--690.

\bibitem[Qu et~al., 2000]{Qu-Lindsay-Li}
Qu, A., Lindsay, B.~G., and Li, B. (2000).
\newblock Improving generalised estimating equations using quadratic inference
  functions.
\newblock {\em Biometrika}, 87(4):823--836.

\bibitem[Singh et~al., 2005]{Singh-Xie-Strawderman}
Singh, K., Xie, M., and Strawderman, W.~E. (2005).
\newblock Combining information from independent sources through confidence
  distributions.
\newblock {\em The Annals of Statistics}, 33(1):159--183.

\bibitem[Song, 2007]{Song}
Song, P. X.-K. (2007).
\newblock {\em Correlated Data Analysis: Modeling, Analytics, and
  Applications}.
\newblock Springer Series in Statistics.

\bibitem[Varin et~al., 2011]{Varin-Reid-Firth}
Varin, C., Reid, N., and Firth, D. (2011).
\newblock An overview of composite likelihood methods.
\newblock {\em Statistica Sinica}, 21(1):5--42.

\bibitem[Wang et~al., 2012]{Wang-Wang-Song-2012}
Wang, F., Wang, L., and Song, P. X.-K. (2012).
\newblock Quadratic inference function approach to merging longitudinal
  studies: validation and joint estimation.
\newblock {\em Biometrika}, 99(3):755--762.

\bibitem[Wang et~al., 2016]{Wang-Wang-Song-2016}
Wang, F., Wang, L., and Song, P. X.-K. (2016).
\newblock Fused lasso with the adaptation of parameter ordering in combining
  multiple studies with repeated measurements.
\newblock {\em Biometrics}, 72(4):1184--1193.

\bibitem[Xie and Singh, 2013]{Xie-Singh}
Xie, M. and Singh, K. (2013).
\newblock Confidence distribution, the frequentist distribution estimator of a
  parameter: a review.
\newblock {\em International Statistical Review}, 81(1):3--39.

\bibitem[Xu and Reid, 2011]{Xu-Reid}
Xu, X. and Reid, N. (2011).
\newblock On the robustness of maximum composite likelihood estimate.
\newblock {\em Journal of Statistical Planning and Inference},
  141(9):3047--3054.

\bibitem[Zhang et~al., 2015]{Zhang-Duchi-Wainwright}
Zhang, Y., Duchi, J., and Wainwright, M. (2015).
\newblock Divide and conquer kernel ridge regression: a distributed algorithm
  with minimax optimal rates.
\newblock {\em Journal of Machine Learning Research}, 16:3299--3340.

\bibitem[Zhou and Song, 2016]{Zhou-Song}
Zhou, Y. and Song, P. X.-K. (2016).
\newblock Regression analysis of networked data.
\newblock {\em Biometrika}, 103(2):287--301.

\end{thebibliography}

\end{document}